\documentclass[11pt,draftcls,onecolumn]{IEEEtran}
\usepackage{amssymb}
%
\ifCLASSOPTIONcompsoc
  \usepackage[nocompress]{cite}
\else
  \usepackage{cite}
\fi

%
\usepackage[cmex10]{amsmath}
\usepackage{amsthm}
\usepackage{booktabs}

\newtheorem{theorem}{Theorem}
\newtheorem{lemma}[theorem]{Lemma}
\newtheorem{proposition}[theorem]{Proposition}

\newtheorem{definition}{Definition}

\newtheorem{remark}{\emph{Remark}}
\newtheorem{example}{\emph{Example}}

\newcommand{\bc}{\mathbf{c}}

\date{}


\title{Repeated-root Constacyclic Codes with Optimal Locality}

\author{Wei Zhao, Kenneth W. Shum and Shenghao Yang
 \thanks{W. Zhao is with the School of Science and Engineering, The Chinese University of Hong Kong, Shenzhen, and University of Science and
Technology of China, China (e-mail: zhaowei@cuhk.edu.cn).}
\thanks{K. W. Shum is with the School of Science and Engineering, The Chinese University of Hong Kong, Shenzhen (e-mail: wkshum@cuhk.edu.cn).}
\thanks{S. Yang is with the School of Science and Engineering, The Chinese University of Hong Kong, Shenzhen, Guangdong, 518172, China, the Shenzhen Key Laboratory of IoT Intelligent Systems and Wireless Network Technology, The Chinese University of Hong Kong, Shenzhen,Guangdong, 518172, China, and also with the Shenzhen Research Institute of Big Data, Shenzhen, Guangdong, 518172, China (e-mail:
shyang@cuhk.edu.cn).}
\thanks {This article was presented in part at the 2020 IEEE ISIT.}}





\begin{document}

\maketitle

\begin{abstract}
A code is called a locally repairable code (LRC) if any code symbol is a function of a small fraction of other code symbols. When a locally repairable code is employed in a distributed storage systems, an erased symbol can be recovered by accessing only a small number of other symbols, and hence alleviating the network resources required during the repair process. In this paper we consider repeated-root constacyclic codes, which is a generalization of cyclic codes, that are optimal with respect to a Singleton-like bound on minimum distance. An LRC with the structure of a constacyclic code can be encoded efficiently using any encoding algorithm for constacyclic codes in general. In this paper we obtain optimal LRCs among these repeated-root constacyclic codes. Several infinite classes of optimal LRCs over a fixed alphabet are found. Under a further assumption that the ambient space of the repeated-root constacyclic codes is a chain ring, we show that there is no other optimal LRC.
\end{abstract}
		

\begin{IEEEkeywords}
Constacyclic codes, locally repairable codes , Singleton-like bound.
\end{IEEEkeywords}


\section{Introduction}

\IEEEPARstart Erasure coding techniques can dramatically improve the data availability and reliability of large-scale distributed storage system by adding redundancy. Node failure, however, reduce the effectiveness of erasure coding in terms of both data availability and reliability.
It is shown in \cite{rashmi2013solution} that the node failures in Facebook warehouse cluster occurs very often, and  $98.08\%$ of the failures are single-node failure. The Facebook warehouse cluster is equipped with the $(14, 10)$ Reed-Solomon code. The node repair of Reed-Solomon code is inefficient in that the repair of a failure node need to contact with $10$ other (helper) nodes and download $10$ times of the amount of data stored in the failure node. Hence, node failure may result in network congestion due to data retrieval, and even permanent data loss if too many nodes fail simultaneously.  It is of significant practical interests to design erasure codes that enable efficient node repair.

Various metrics of node repairing costs have been studied in literature, including the \emph{repair-bandwidth} \cite{dimakis2010network}, which is the amount of data traffic required in repairing a failure node, and the \emph{repair locality} \cite{s8,s9}, which is the number of nodes that participate in the repair process. In this paper, we focus on the repair locality. Consider a block code $\mathcal{C}$ of length $n$.
We say a code has \emph{(all-symbol) locality} $r$ if every code symbol can be reconstructed from $r$ other symbols. If an LRC with all-symbol locality is employed in a distributed storage system, we can repair any single node failure from $r$ other surviving nodes. A code with local property is called a \emph{locally repairable code} (LRC). We note that the $r$ helper nodes depend on the index of the failed node, and hence the repair model is different from that in \cite{dimakis2010network}.

LRCs have drawn many research interests recently \cite{s9, s33, s23,s10,s12,s13,s14,s16,s5}.
In~\cite{s9}, a Singleton-like bound is derived for $[n,k,d]$ linear code with locality $r$,
\begin{equation}
  \label{equation1}
  d\leq n-k- \left\lceil\frac{k}{r}\right\rceil+2.
\end{equation}
We note that this bound does not depend on the alphabet size and is valid for nonlinear codes. A bound on the minimum distance of linear LRCs that depends on the alphabet size is given by Cadambe-Mazumdar in~\cite{s33},
 \begin{equation} \label{equation2}
 k\leq \min_{t\in \mathbb{Z}_{+}}\left\{tr+k_{\text{opt}}^{q}(n-t(r+1),d)\right\},
 \end{equation}
where $k_{\text{opt}}^{q}(n',d)$ denotes the largest possible dimension of a $q$-ary linear code of length $n'$ and minimum distance~$d$,  and $\mathbb{Z}_{+}$ denotes the set of all non-negative integers.  A $q$-ary $[n,k,d]$ code with locality $r$ is said to be \emph{$d$-optimal} if it achieves the maximum possible minimum distance among all $q$-ary $[n,k]$ code with locality~$r$.
A $q$-ary $[n,k,d]$ code is said to be \emph{$r$-optimal} if it has locality $r$ and any $q$-ary $[n,k,d]$ code has locality greater than or equal to~$r$.
In this paper, we focus on linear codes that attain equality in the Singleton-like bound in~\eqref{equation1}. These codes are $d$-optimal, and we will refer to them simply as \emph{optimal LRCs}.

For any fixed locality $r$ and minimum distance $d$, the code rate  of optimal LRCs becomes larger as the code length becomes larger (ref. \eqref{equation1}). One of the objectives for constructing LRCs is to achieve a large code length $n$ for a given alphabet size (field size) $q$~\cite{s8,s9,s10,s12,s13,s14,s15}. Based on the classical  MDS conjecture, one could wonder if optimal LRCs can have length longer than $q + 1$. Barg {\em et al}. \cite{s16} constructed optimal LRCs  with $n\approx q^{2}$ for parameters $d=3$ and $2\leq r\leq 4$ using algebraic surface. It is a natural to consider the existence of optimal LRCs with arbitrary code lengths for a given alphabet size~$q$. For distance $d\geq 5$, the code length of optimal LRCs over an alphabet size $q$ is shown to be in the order $O(dq^{3})$~\cite{s4}. The constructions in  \cite{s5} give a class  of optimal LRCs with distance $3$ and $4$ with  unbounded code lengths for a fixed code alphabet size greater than two. Binary LRCs with unbounded length are characterized in~\cite{s6}.

In this paper, we identify the optimal LRCs of length $n>q+1$ within the class of repeated-root constacyclic codes. The result is shown in  Table~\ref{biaoge1}. Constacyclic codes is a generalization of cyclic codes. A formal definition of constacyclic codes is given in the next section. Most of the existing optimal LRCs among cyclic or constacyclic codes in literature, such as \cite{s38,s5,s6,s28,s29,s30,s31,s32,CFXF19,QZ20}, use the zero structure of the codes. As a consequence, all these constructions are confined by the condition that $r+1$ divides $n$, where $r$ is the locality and $n$ is the code length. In contrast, the code length of some of the optimal LRCs obtained in this paper (such as Classes 2,  4 and 9) need not be divisible by $r+1$.


\begin{table}[tb]
  \centering
  \caption{Optimal repeated-root $\lambda$-constacyclic codes over $\mathbb{F}_{p^{m}}$.}
  \label{biaoge1}
  \begin{tabular}{|c|c|c|c|c|c|c| }
    \toprule
		Class & Code & Dimension & Minimum      & Locality & Alphabet  & Remarks\\
		      &  length    &    &  distance &              & size  & \\ \midrule
    1 & $2^{s}$     &$2^{s-1}-1$           &$ 4$&$ 1$&$2^{m}$ &  $s\geq 2$\\
    \midrule
    2 &$p^{s}$    &$p^{s}-p^{s-1}-1$     &$3$&$p-1$ & $p^{m}$ &  $p\geq 3$ and $s\geq2$\\
    \midrule
    3 &$\eta p^{s}$&$\eta(p^{s}-p^{s-\ell-1})$&$2$&$p^{\ell+1}-1$ & $p^{m}$ & $\gcd(\eta,p)=1$, $0\leq \ell\leq s-1$ and $s\geq 2$\\
    \midrule
    4 &$p^{s}$     &$ p^{s}-2$           & $2$&$p^{s}-p^{s-1}-1$ & $ p^{m}$&  $p\geq 3$ and $s\geq2$\\
    \midrule
    5 &$p$     &$p-t$           &$t+1$    &$p-t$ & $p^m$ & $2\leq t\leq p-1$ and $p\geq3$\\
    \midrule
    6 &$p^{s}$& $1$               &$p^{s}$  &$1$ &$p^{m}$&$s\geq2$\\
    \midrule
    7& $2p^s$&$p^s$&$2$ & $1$& $p^m$& $p\geq3$\\
    \midrule
    8& $2p^s$ & $2p^s-p^{s-k-1}$ & $ 2$ & $2p^{k+1}-1$& $p^m$& $0\leq k\leq s-1$, $s\geq2$  and $p\geq3$\\
    \midrule
    9& $2p^s$ & $2p^s-2$ & $2$ & $2p^s-2p^{s-1}-1$ & $p^m$& $s\geq2$ and $p\geq3$ \\
    \midrule
    10 & $2p$ & $p-i$ & $2(i+1)$ & $1$ & $p^m$& $1\leq i\leq p-1$  and $p\geq3$\\
    \bottomrule
   \end{tabular}
  \end{table}

If we take $m=1$ in Class 1, i.e., when the alphabet has size 2, then the LRCs have the same parameters as in an optimal family of LRCs reported in~\cite{s6}. Hence, the LRCs in Class 1 can be regarded as an extension of this optimal family to finite fields of characteristic~2.

The LRCs in Classes 3, 4, 7, 8 and 9 has minimum distance equal 2, which is the same as the minimum distance of a simple parity-check code. In a simple parity-check code, we can compute the lost code symbol by taking the sum of all other code symbols. Using the LRCs in these classes, we do not need to access all the other code symbols in order to repair the erased code symbol, i.e., the locality is much smaller than the code length.

In contrast, the LRCs in Classes 1, 6, 7 and 10 have locality 1, meaning that any erased code symbol can be recovered by reading another code symbol. They have the fastest repair speed, at the expense of larger amount of required redundancy.


The remainder of this paper is organized as follows. In Section~\ref{sec:pre}, we provide
some preliminaries on constacyclic codes and locally repairable codes.
In Section~\ref{sec:pp}, we analyze the locality of repeated-root constacyclic codes of length $\eta p^{s}$ and obtain several classes of optimal constacyclic codes. Some more optimal constacyclic LRCs with length $2p^s$, where $p$ is the characteristic of the field size, are obtained in Section~\ref{sec:eta2}. We prove that under the conditions given in Section~\ref{sec:41}, the optimal LRCs found in section~\ref{sec:pp} are the only repeated-root constacyclic codes that are optimal with respect to the Singleton-like bound. Concluding remarks is given in Section~\ref{sec:conclusion}.

\section{Preliminaries}
\label{sec:pre}
In this section, we present some preliminaries on locally repairable codes and constacyclic codes, and some existing results that we will need later.

Denote by $\mathbb{F}_{q}$ a finite field with cardinality $q=p^{m}$ and $\mathbb{F}_{q}^{n}$ the $n$-dimensional vector space over $\mathbb{F}_{q}$. The set of nonzero elements in $\mathbb{F}_{p^m}$ is denoted by $\mathbb{F}_{p^{m}}^{*}$.  A $q$-ary {\em linear code} of length $n$ is a subspace of $\mathbb{F}_{q}^{n}$.  We index the $n$ code symbols by  $[n]:= \{0,1,\ldots, n-1\}$. The \emph{support} of a vector in $\mathbb{F}_q^n$ is the set of the indices of the nonzero components and the size of the support is called the \emph{weight} of the vector.  The \emph{minimum distance} of a linear code $\mathcal{C}$ is the size of the smallest nonempty support, and is denoted by~$d(\mathcal{C})$.

For each $i\in [n]$, we say that the $i$-th symbol has \emph{locality} $r$ if there exists an index set $I\subset [n]\setminus\{i\}$ of size at most $r$ such that the $i$-th symbol is a deterministic function of the code symbols with indices in~$I$. The code $\mathcal{C}$ is said to have  (all-symbol) \emph{locality} $r$ if every code symbol has locality $r$. A linear code of length $n$, dimension $k$, minimum distance $d$ and locality $r$ is called an $(n,k,d,r)$ locally repairable codes (LRC). Furthermore, we say $\mathcal{C}$ has the {\em minimum locality} $r_{\min}$ if $\mathcal{C}$ has locality $r_{\min}$ but not $r_{\min}-1$.

 For a linear code $\mathcal{C}$ in general of length $n$ over $\mathbb{F}_{q}$, the {\em dual code} $\mathcal{C}^{\perp}$ of $\mathcal{C}$ is the set of vectors in $\mathbb{F}_{q}^{n}$ that are orthogonal to all codewords in $\mathcal{C}$. Recall that two vectors in $\mathbb{F}_q^n$ are said to  be {\em orthogonal} if the inner product between them is zero. We denote the minimum distance of $\mathcal{C}^{\perp}$ by $d^{\perp}$. A code symbol in $\mathcal{C}$ with index $i$ has locality $r$ whenever there exists a codeword in $\mathcal{C}^\perp$ whose support has size at most $r+1$ and contains $i$ as an element.

In this paper we consider LRC that are closed under constacyclic shift. For a nonzero element $\lambda\in \mathbb{F}_{q}$ and a vector
$(c_{0}, c_{1}, \ldots, c_{n-1})\in\mathbb{F}_{q}^{n}$, the \emph{$\lambda$-constacyclic 
shift} $\tau_{\lambda}$ is defined by
\[\tau_{\lambda}(c_{0}, c_{1}, \ldots, c_{n-2}, c_{n-1}) := (\lambda c_{n-1}, c_{0}, c_1, \ldots, c_{n-2}).\]
We use bold letter to denote a vector in $\mathbb{F}_{q}$, e.g., $\bc:=(c_{0}, c_{1}, \ldots, c_{n-1})$. For a positive integer $t$, we define recursively
\begin{equation*}
\tau_{\lambda}^{t}(\bc)=\tau_{\lambda}^{t-1}(\tau_{\lambda}(\bc)),
\end{equation*}
i.e., performing  $\lambda$-constacyclic shift $t$ times on the vector~$\bc$. A $q$-ary linear code $\mathcal{C}$ is called a {\em $\lambda$-constacyclic code} if $\mathcal{C}$ is closed under $\lambda$-constacyclic shift. A $\lambda$-constacyclic code is called a {\em cyclic code} if $\lambda=1$ and a {\em negacyclic code} if $\lambda=-1$.

We identifying a codeword $\bc=(c_{0}, c_{1}, \ldots, c_{n-1})$ with a polynomial
$$c(x)=c_{0}+c_{1}x+\cdots+c_{n-1}x^{n-1}$$
in $\mathbb{F}_q[x]$. We call this polynomial the {\em code polynomial} of codeword $\bc$.
With this association, the codewords in a constacyclic code $\mathcal{C}$ can be regarded as a set of polynomials in the residue ring $\frac{\mathbb{F}_{q}[x]}{\langle x^{n}-\lambda\rangle}$, and a $\lambda$-constacyclic shift can be represented algebraically by multiplying by $x$ and reducing modulo $x^{n}-\lambda$. The following fact is well-known~\cite{s2}:
 A linear code of length $n$ over $\mathbb{F}_{q}$  is $\lambda$-constacyclic if and only if the corresponding code polynomials form an ideal in $\frac{\mathbb{F}_{q}[x]}{\langle x^{n}-\lambda\rangle}$.
Because the residue ring $\frac{\mathbb{F}_{q}[x]}{\langle x^{n}-\lambda\rangle}$ is a principal ideal ring, the ideal in $\frac{\mathbb{F}_{q}[x]}{\langle x^{n}-\lambda\rangle}$ representing the constacyclic code $\mathcal{C}$ is generated by one polynomial. The monic polynomial of smallest degree among the polynomials that generate the ideal is called the {\em generator polynomial}. We note that the generator polynomial is a divisor of $x^{n}-\lambda$. It can be shown that the dual code of a $\lambda$-constacyclic code over $\mathbb{F}_{p^{m}}$ is a $(\lambda^{-1})$-constacyclic code (see e.g.~\cite[Proposition 2.4]{dinh18}).

In this paper we consider constacyclic codes with length $n=\eta p^s$, where $\eta$ is coprime to $p$ and $s$ is a positive integer. The binomial $x^{\eta p^s}-\lambda$ can  be factorized as $(x^\eta - \lambda_0)^{p^s}$, where $\lambda_{0}$ is an element in $\mathbb{F}_{p^{m}}^*$ such that $\lambda_{0}^{p^{s}}=\lambda$. Such a $\lambda_0$ exists because  the Frobenius action is an automorphism.

\begin{definition}
Given $\lambda\in\mathbb{F}_{p^m}^*$, we let $\lambda_0$ be an element in $\mathbb{F}_{p^m}^*$ such that $\lambda_0^{p^s} = \lambda$. For $i=0,1,\ldots, p^{s}$, we use the notation $\mathcal{C}_{i}(\eta,p^s,\lambda_0)$ to denote the $\lambda$-constacyclic code generated by $(x^\eta-\lambda_0)^i$ in $\frac{\mathbb{F}_{p^m}[x]}{ \langle x^{\eta p^{s}}-\lambda\rangle}$,
$$\mathcal{C}_i(\eta,p^s,\lambda_0) := \langle (x^\eta-\lambda_{0})^{i}\rangle /  \langle x^{\eta p^{s}}-\lambda\rangle.$$
When the code parameters $\eta$, $p^s$ and $\lambda_0$ are clear from the context, we will  write $\mathcal{C}_i$ as a short-hand notation of $\mathcal{C}_i(\eta,p^s,\lambda_0)$.
\label{def:C}
\end{definition}

The following proposition provides a method for determining the minimum distance of a repeated-root constacyclic code. It was first proved for repeated-root cyclic codes in~\cite{s3}, but the same conclusion also holds for repeated-root constacyclic codes in general.

\begin{proposition}[\cite{s3}]  \label{prop:Massey}
Let $\mathcal{C}$ be a repeated-root (consta)cyclic code over $\mathbb{F}_q$ of block length $n=p^{\delta}\eta$ with $p$ the characteristic of $\mathbb{F}_{q}$, $\delta\geq 1$ and $\gcd(p,\eta)=1$. Suppose that the generator polynomial $g(x)$ of $\mathcal{C}$ is factorized as
$$
g(x) = \prod_{i=1}^k m_i(x)^{e_i}
$$
where $m_i(x)$ is an irreducible polynomial and $e_i$ is the multiplicity of $m_i(x)$ in $g(x)$, for $i=1,2,\ldots, k$. Let $\bar{\mathcal{C}}_t$ denote the simple-root (consta)cyclic codes of length $\eta$ generated by the product of the irreducible polynomials $m_i(x)$ whose multiplicity $e_i$ in $g(x)$ is strictly larger than~$t$. Then
\begin{equation}
d(\mathcal{C})=\min\{V_{t}\cdot d(\bar{\mathcal{C}}_{t})|\, t=0,1,\ldots, p^{s}-1,\ d(\bar{\mathcal{C}}_{t})>0\},
\label{eq:Massey}
\end{equation}
where
\begin{equation} V_{t}:=\prod_{j=0}^{m-1}(t_{j}+1)
\label{eq:V}
\end{equation}
 and $t_0,\ldots, t_{m-1}$ are the coefficients of the radix-$p$ expansion of $t$, i.e., $t=\sum_{j=0}^{m-1}t_{j}p^{j}$.
\end{proposition}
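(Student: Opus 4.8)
The plan is to pin down $d(\mathcal{C})$ by bounding the weight of an arbitrary nonzero codeword from below and then exhibiting codewords that meet the bound, and the engine of both directions is a coordinate change that disentangles the separable length-$\eta$ part from the repeated length-$p^\delta$ part. First I would use $x^n-\lambda=(x^\eta-\lambda_0)^{p^\delta}=\prod_i m_i(x)^{p^\delta}$ and the fact that a polynomial of degree $<n$ is a codeword of $\mathcal{C}$ iff it is divisible by $g=\prod_i m_i^{e_i}$, i.e. iff its $m_i$-adic valuation $v_{m_i}$ is at least $e_i$ for every $i$. Writing each position $\ell\in\{0,\dots,n-1\}$ uniquely as $\ell=a+\eta k$ with $0\le a<\eta$ and $0\le k<p^\delta$, I would identify $\mathbb{F}_q[x]/\langle x^n-\lambda\rangle$ with $\eta$ independent copies of the local ring $\mathbb{F}_q[y]/\langle(y-\lambda_0)^{p^\delta}\rangle$ (with $y\leftrightarrow x^\eta$), an identification under which the Hamming weight is the sum of the weights of the $\eta$ copies and multiplication by $(x^\eta-\lambda_0)^t$ acts on each copy as multiplication by $(y-\lambda_0)^t$. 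This uses only $x^n-\lambda=(x^\eta-\lambda_0)^{p^\delta}$ and characteristic-$p$ arithmetic, so the argument for cyclic codes in \cite{s3} transfers verbatim to the constacyclic case.

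For the lower bound, take a nonzero codeword $c$, let $t$ be its $(x^\eta-\lambda_0)$-adic valuation, and write $c=(x^\eta-\lambda_0)^t b$ with $(x^\eta-\lambda_0)\nmid b$; since $c\neq0$ and $\deg c<n$ one gets $0\le t\le p^\delta-1$. Reducing modulo $x^\eta-\lambda_0$ gives $\bar b:=b\bmod(x^\eta-\lambda_0)$, a vector of length $\eta$, and I would show $\bar b$ is a nonzero codeword of $\bar{\mathcal{C}}_t$: it is nonzero because $(x^\eta-\lambda_0)\nmid b$, and for each $i$ with $e_i>t$ the inequality $v_{m_i}(c)\ge e_i$ forces $v_{m_i}(b)\ge e_i-t\ge1$, i.e. $m_i\mid\bar b$, which is exactly membership in $\bar{\mathcal{C}}_t=\langle\prod_{e_i>t}m_i\rangle$. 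Hence at least $d(\bar{\mathcal{C}}_t)$ of the $\eta$ copies carry a nonzero reduction, and in each such copy the content is divisible by $(y-\lambda_0)^t$; the weight sub-lemma below then shows each such copy contributes weight at least $V_t$, so $\mathrm{wt}(c)\ge V_t\,\mathrm{wt}(\bar b)\ge V_t\,d(\bar{\mathcal{C}}_t)\ge\min_t\{V_t\,d(\bar{\mathcal{C}}_t)\}$.

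The technical heart, which I regard as the main obstacle, is the following: in $\mathbb{F}_q[y]/\langle(y-\lambda_0)^{p^\delta}\rangle$, every nonzero polynomial of degree $<p^\delta$ divisible by $(y-\lambda_0)^t$ has at least $V_t$ nonzero coefficients, with $(y-\lambda_0)^t$ attaining this. The equality case follows from the freshman's-dream factorization $(y-\lambda_0)^t=\prod_j(y^{p^j}-\lambda_0^{p^j})^{t_j}$ together with Lucas' theorem: each factor contributes $t_j+1$ terms at degrees spaced by $p^j$, and these supports are disjoint by uniqueness of the radix-$p$ expansion, giving exactly $\prod_j(t_j+1)=V_t$ terms. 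For the inequality I would induct on $\delta$ by peeling the lowest digit $t_0$ of $t$: grouping the terms of a candidate $h$ by the residue of the exponent modulo $p$, the factor $(y^p-\lambda_0^p)^{\lfloor t/p\rfloor}$ forces each residue group to be divisible by $(z-\lambda_0^p)^{\lfloor t/p\rfloor}$ (so each nonzero group has weight $\ge V_{\lfloor t/p\rfloor}$ by induction), while the factor $(y-\lambda_0)^{t_0}$ forces at least $t_0+1$ residue groups to be nonzero; the latter is a Descartes-type sparsity fact, valid because $\lambda_0\neq0$ and the relevant exponents lie below $p$, that a nonzero polynomial of degree $<p$ divisible by $(y-\lambda_0)^{t_0}$ has at least $t_0+1$ terms. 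Multiplying gives $\mathrm{wt}(h)\ge(t_0+1)V_{\lfloor t/p\rfloor}=V_t$.

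Finally, for achievability I would establish $d(\mathcal{C})\le V_t\,d(\bar{\mathcal{C}}_t)$ for each valid $t$ by exhibiting a matching codeword. Starting from a minimum-weight word $w$ of $\bar{\mathcal{C}}_t$, the naive lift $c=(x^\eta-\lambda_0)^t w$ already has weight exactly $V_t\,\mathrm{wt}(w)$, since the $V_t$ monomials of $(x^\eta-\lambda_0)^t$ occupy degrees that differ by multiples of $\eta$ and hence cannot cancel against the translates of $w$ (which has degree $<\eta$), and none exceeds $n$. This lift lies in $\mathcal{C}$ whenever every multiplicity satisfies $e_i\le t+1$; when some factor has $e_i\ge t+2$ the valuation must be raised, and I would do this at no weight cost using Frobenius powers (for instance, $m_i^{p^a}=m_i(x^{p^a})$ has valuation $p^a$ but the same weight as $m_i$), choosing the lift so that $v_{m_i}(c)\ge e_i$ for all $i$ while keeping every nonzero copy at weight $V_t$. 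Combining the two directions yields the formula. I expect the sub-lemma of the previous paragraph to be the crux; assembling the weight-optimal codeword for the problematic $t$, namely those with a high-multiplicity factor, is the secondary point that requires care.
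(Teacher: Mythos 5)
First, a point of comparison: the paper does not prove this proposition at all --- it is quoted from \cite{s3} with the remark that the cyclic-code argument carries over to constacyclic codes, so there is no in-paper proof to match against. Judged on its own, your architecture is the classical one (split $\mathbb{F}_q[x]/\langle x^n-\lambda\rangle$ into $\eta$ copies of $\mathbb{F}_q[y]/\langle(y-\lambda_0)^{p^\delta}\rangle$, show $\bar b\in\bar{\mathcal{C}}_t\setminus\{0\}$, reduce to a one-copy weight bound, then exhibit matching codewords), and the reduction and the handling of $\bar b$ are correct. But the sub-lemma you correctly identify as the crux is \emph{false as stated}, and your inductive proof of it contains a false step. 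Divisibility by $(y-\lambda_0)^t$ does not imply weight at least $V_t$: take $p=3$, $\delta=2$, $t=2$ and $h=(y-\lambda_0)^3=y^3-\lambda_0^3$, which is divisible by $(y-\lambda_0)^2$ and has weight $2$, while $V_2=3$. (Indeed the minimum weight of the ideal $\langle(y-\lambda_0)^t\rangle$ is $\min_{t'\ge t}V_{t'}$, not $V_t$; this is exactly what Proposition~\ref{prop} records.) The correct hypothesis is that the $(y-\lambda_0)$-adic valuation of $h$ equals $t$ \emph{exactly} --- which, fortunately, is what you actually have in the application, since $b_a(\lambda_0)\neq0$ on the copies you count. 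Correspondingly, the inductive step's claim that ``the factor $(y-\lambda_0)^{t_0}$ forces at least $t_0+1$ residue groups to be nonzero'' fails: $h=y^p-\lambda_0^p$ is divisible by $y-\lambda_0$ (so $t_0=1$, $\lfloor t/p\rfloor=0$) yet has a single nonzero residue class modulo $p$, and your bound degenerates to $1\cdot V_0=1<2$. The repair is to take $T$ to be the \emph{minimum} of the $(z-\lambda_0^p)$-valuations of the nonzero groups $h_a(z)$, note that the reduction $R(y)=\sum_a\tilde h_a(\lambda_0^p)\,y^a$ of $h/(y^p-\lambda_0^p)^T$ modulo $y^p-\lambda_0^p$ is nonzero of degree less than $p$ with valuation exactly $t_0$ and that $T=\lfloor t/p\rfloor$ (this is where exact valuation of $h$ enters), and then apply the Descartes-type fact to $R$ to get at least $t_0+1$ groups of valuation \emph{exactly} $T$, each of weight at least $V_T$ by induction.

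The achievability direction has a second, smaller gap. Your naive lift $(x^\eta-\lambda_0)^t\bar w$ lies in $\mathcal{C}$ only when every $e_i\le t+1$; for the remaining values of $t$ you only gesture at ``Frobenius powers.'' This is precisely the case that needs an explicit construction, e.g.\ $c(x)=(x^\eta-\lambda_0)^t\,\bar w(x^{p^\delta})$ for $\bar w$ a minimum-weight word of an appropriate Frobenius twist of $\bar{\mathcal{C}}_t$: one must verify both that the $V_t\cdot\mathrm{wt}(\bar w)$ exponents are pairwise distinct modulo $n$ (a Chinese-remainder argument using $\gcd(\eta,p)=1$) and that $v_{m_i}(c)\ge e_i$ for every $i$, the latter requiring one to match the irreducible factors of $x^\eta-\lambda_0$ with those of its twist. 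As written, the proposal asserts the existence of such a lift without constructing it.
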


In the definition of $\bar{\mathcal{C}}_t$ in Prop.~\ref{prop:Massey}, the code $\bar{\mathcal{C}}_t$ is the whole vector space $\mathbb{F}_{p^m}^{\eta}$ when $e_i \leq t $ for all $i$. In which case the minimum distance $d(\bar{\mathcal{C}}_t)$ is 1. The code
$\bar{\mathcal{C}}_t$ could be the zero code consisting of only the zero polynomial. This happens when the product of $m_i(x)$ for all $i=1,2,\ldots, k$ is $x^\eta - \lambda_0$, and it is excluded in the computation of the minimum in \eqref{eq:Massey}.

In the special case when $g(x)$ is the power of a binomial $x^\eta - \lambda_0$, the computations involved in Prop.~\ref{prop:Massey} can be simplified, and we have explicit expression for the minimum distance for the code $C_i(\eta, p^s, \lambda_0)$. This special case is studied in~\cite{Dinh08,OO09,Permouth13,liu17,sharma19}.


\begin{proposition}[\cite{Permouth13} Thm 7.9] \label{prop}
For $0\leq i\leq p^{s}$, the code $\mathcal{C}_{i}(\eta, p^s,\lambda_0)$ with length $n = \eta p^s$ has minimum distance
\begin{align}
d(\mathcal{C}_i) &= \text{min}\{V_{i},V_{i+1},\ldots,V_{p^{s}-1}\} \label{eq:dC1} \\
&= (p-\tau_{\nu-1}+1)p^{s-\nu}, \label{eq:dC2}
\end{align}
where $V_t$ is defined in \eqref{eq:V}, and $\tau_{\nu-1}$ is the most significant nonzero digit in the $p$-ary expansion of $p^s-i$ is $\sum_{j=0}^{\nu-1} \tau_j p^j$.
\end{proposition}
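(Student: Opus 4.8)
The plan is to obtain the first equality \eqref{eq:dC1} from Proposition~\ref{prop:Massey} and then to evaluate the resulting minimum combinatorially to reach \eqref{eq:dC2}. First I would record that $x^{\eta}-\lambda_{0}$ is separable over $\mathbb{F}_{p^{m}}$: since $\gcd(\eta,p)=1$ its formal derivative $\eta x^{\eta-1}$ is nonzero and shares no root with $x^{\eta}-\lambda_{0}$ (as $\lambda_{0}\neq 0$), so $x^{\eta}-\lambda_{0}=\prod_{j}m_{j}(x)$ is a product of \emph{distinct} irreducibles. Consequently the generator polynomial $g(x)=(x^{\eta}-\lambda_{0})^{i}=\prod_{j}m_{j}(x)^{i}$ has every irreducible factor occurring with the same multiplicity $i$.

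Feeding this into Proposition~\ref{prop:Massey}, the auxiliary simple-root code $\bar{\mathcal{C}}_{t}$ of length $\eta$ is generated by the product of those $m_{j}$ whose multiplicity exceeds $t$. For $t<i$ every factor qualifies, so $\bar{\mathcal{C}}_{t}$ is generated by $\prod_{j}m_{j}=x^{\eta}-\lambda_{0}$ and is the zero code, which is excluded from the minimum. For $i\leq t\leq p^{s}-1$ no factor qualifies, so $\bar{\mathcal{C}}_{t}$ is the whole space $\mathbb{F}_{p^{m}}^{\eta}$ with $d(\bar{\mathcal{C}}_{t})=1$. Substituting into \eqref{eq:Massey} collapses the minimum to $\min\{V_{t}\mid i\leq t\leq p^{s}-1\}$, which is precisely \eqref{eq:dC1}.

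It remains to show this minimum equals $(p-\tau_{\nu-1}+1)p^{s-\nu}$. The device I would use is the complementation $u:=p^{s}-1-t$, a bijection between $t\in[i,p^{s}-1]$ and $u\in[0,a-1]$ with $a:=p^{s}-i$; since $p^{s}-1$ has all base-$p$ digits equal to $p-1$, no borrow occurs and $t_{j}=(p-1)-u_{j}$ for every $j$, whence $V_{t}=\prod_{j=0}^{s-1}(p-u_{j})$. Writing $a=\sum_{j=0}^{\nu-1}\tau_{j}p^{j}$ with $\tau_{\nu-1}\neq 0$, every $u<a$ satisfies $u<p^{\nu}$, so its digits in positions $\nu,\ldots,s-1$ vanish and contribute a fixed factor $p^{s-\nu}$; thus I am reduced to minimizing $\prod_{j=0}^{\nu-1}(p-u_{j})$ over $u<a$.

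For the lower bound I would split on the leading digit $u_{\nu-1}$, which the constraint $u<a$ forces to be at most $\tau_{\nu-1}$. If $u_{\nu-1}\leq\tau_{\nu-1}-1$ the leading factor already gives $p-u_{\nu-1}\geq p-\tau_{\nu-1}+1$. If $u_{\nu-1}=\tau_{\nu-1}$, then $u<a$ forces the lower block $\sum_{j=0}^{\nu-2}u_{j}p^{j}$ to be strictly smaller than $\sum_{j=0}^{\nu-2}\tau_{j}p^{j}\leq p^{\nu-1}-1$, so the lower digits cannot all equal $p-1$; hence at least one lower factor is $\geq 2$, and combined with the leading factor $p-\tau_{\nu-1}$ the product is $\geq 2(p-\tau_{\nu-1})\geq p-\tau_{\nu-1}+1$ (using $\tau_{\nu-1}\leq p-1$). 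Equality is attained at $u^{*}=\tau_{\nu-1}p^{\nu-1}-1$, i.e.\ $t^{*}=p^{s}-\tau_{\nu-1}p^{\nu-1}$, whose digits are $p-1$ in positions $\geq\nu$, equal to $p-\tau_{\nu-1}$ in position $\nu-1$, and $0$ below, giving $V_{t^{*}}=(p-\tau_{\nu-1}+1)p^{s-\nu}$. I expect the case $u_{\nu-1}=\tau_{\nu-1}$ of the lower bound to be the only delicate point, since it is exactly where the leading factor alone falls short by one unit and one must exploit the strict inequality $u<a$ to recover that unit from a lower digit.
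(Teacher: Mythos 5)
Your proof is correct, and for the first equality it follows essentially the paper's own route: you justify that $x^{\eta}-\lambda_{0}$ is squarefree from $\gcd(\eta,p)=1$ (a point the paper uses but does not argue), note that every irreducible factor of $(x^{\eta}-\lambda_{0})^{i}$ has multiplicity exactly $i$, and conclude from Proposition~\ref{prop:Massey} that $\bar{\mathcal{C}}_{t}$ is the excluded zero code for $t<i$ and the full space for $t\geq i$, giving \eqref{eq:dC1}. Where you genuinely go beyond the paper is \eqref{eq:dC2}: the paper declares that this step ``follows from analyzing the numerical sequence $(V_t)_{t\geq 1}$ and is omitted,'' whereas you carry out the analysis. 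Your digit complementation $u=p^{s}-1-t$, which converts $V_{t}$ into $\prod_{j}(p-u_{j})$ and the index range into $u<p^{s}-i$, is clean and correct; the two-case bound on the leading digit is sound, including the delicate subcase $u_{\nu-1}=\tau_{\nu-1}$ where the strict inequality $u<a$ guarantees some lower digit below $p-1$ and hence a compensating factor of $2$, and your minimizer $u^{*}=\tau_{\nu-1}p^{\nu-1}-1$ does achieve the bound. The only blemishes are at the degenerate endpoints: for $i=p^{s}$ the minimum is over an empty set (the code is zero), and for $i=0$ one has $\nu=s+1$ so the ``fixed factor $p^{s-\nu}$'' bookkeeping is formally awkward even though the conclusion $d=1$ is trivially right; for $1\leq i\leq p^{s}-1$, the range that actually matters, your argument is complete.
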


The values of $\nu$ and $\tau_{\nu-1}$ in Prop.~\ref{prop} can be explicitly computed by
\begin{align*}
\nu &= f(p,s,i) := \lfloor \log_p(p^s-i) \rfloor+1 , \\
\tau_{\nu-1} &= g(p,s,i) := \left\lfloor \frac{p^s-i}{p^{\nu-1}}\right\rfloor = \left\lfloor \frac{p^s-i}{p^{f(p,s,i)-1}}\right\rfloor.
\end{align*}

\begin{IEEEproof}
Fix an integer $i$ between 0 and $p^s$. If $x^\eta - \lambda_0$ is irreducible over $\mathbb{F}_{p^m}$, we apply Prop.~\ref{prop:Massey} with $k=1$ and the generator polynomial $g(x)$ factorized as $(x^\eta - \lambda_0)^i$. When $x^\eta - \lambda_0$ is reducible over $\mathbb{F}_{p^m}$, it is factorized as product of irreducible polynomials
$$
x^\eta - \lambda_0 = m_1(x) m_2(x) \cdots m_k(x),
$$
and each irreducible factor has multiplicity 1. In this case we apply Prop.~\ref{prop:Massey} with
$$g(x) = m_1(x)^i m_2(x)^i \cdots m_k(x)^i.$$

For $t< i$, the simple-root constacyclic code $\bar{\mathcal{C}}_i$ in Prop.~\ref{prop:Massey} is the zero code of length $\eta$. For $t\geq i$, $\bar{\mathcal{C}}_i$ is the trivial code $\mathbb{F}_{p^m}^\eta$ with minimum distance 1. By Prop.~\ref{prop:Massey}, the minimum distance of $C_i(\eta, p^s, \lambda_0)$ is the minimum of $V_i$, $V_{i+1}, \ldots, V_{p^s-1}$. This proves equation~\eqref{eq:dC1}. The proof of \eqref{eq:dC2} follows from analyzing the numerical sequence $(V_t)_{t\geq 1}$ and is omitted.
\end{IEEEproof}

\begin{example}
Consider negacyclic code $C_i(1,5^2,-1)$ of length 25 with alphabet $\mathbb{F}_{5}$, generated by $(x+1)^i$ for $i=1,2,\ldots, 26$. To apply Prop.~\ref{prop}, we calculate $V_t$ in \eqref{eq:V}, for $t=0,1,2,\ldots, 24$.
$${\tiny
\begin{array}{|c|ccccccccccccccccccccccccc|} \hline
t&0&1&2&3&4&5&6&7&8&9&10&11&12&13&14&15&16&17&18&19&20&21&22&23&24 \\ \hline
t_0&0&1&2&3&4&0&1&2&3&4&0&1&2&3&4&0&1&2&3&4&0&1&2&3&4 \\
t_1&0&0&0&0&0&1&1&1&1&1&2&2&2&2&2&3&3&3&3&3&4&4&4&4&4 \\ \hline
V_t&1&2&3&4&5&2&4&6&8&10&3&6&9&12&15&4&8&12&16&20&5&10&15&20&25 \\ \hline
d(C_i)&1&2&2&2&2&2&3&3&3&3&3&4&4&4&4&4&5&5&5&5&5&10&15&20&25 \\ \hline
\end{array}
}
$$
The minimum distance of $C_i$ is
$$
d(C_i) = \begin{cases}
1+\lceil i/5 \rceil & \text{ if } 0\leq i \leq 20 , \\
10+5(i-21) & \text{ if } 21 \leq i \leq 24.
\end{cases}
$$
\end{example}

\section{Optimal Constacyclic Codes of Length $\eta p^{s}$}
\label{sec:pp}



In general, given a linear LRC $\mathcal{C}$ with minimum locality $r$, we have $d^{\perp}\leq r+1$, because there is a codeword of weight at most $r+1$ in $\mathcal{C}^{\perp}$. 
The next theorem shows that we have equality $d^{\perp}= r+1$ when $d^\perp \geq 2$.

\begin{lemma}\label{lemma2}
Let $\mathcal{C}$ be a constacyclic code of length $n$ with $d^\perp \geq 2$. Then the minimum locality of $\mathcal{C}$ is equal to $d^{\perp}-1$.
\end{lemma}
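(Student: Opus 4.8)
The plan is to prove the two inequalities $r_{\min}\ge d^\perp-1$ and $r_{\min}\le d^\perp-1$ separately. The first is exactly the general bound recalled immediately before the lemma: a code with minimum locality $r_{\min}$ satisfies $d^\perp\le r_{\min}+1$, so $r_{\min}\ge d^\perp-1$. The whole content of the lemma therefore lies in the reverse inequality, i.e. in showing that \emph{every} code symbol can be recovered from $d^\perp-1$ others, and here I would exploit the constacyclic structure of the dual.

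Recall from the preliminaries that the dual of a $\lambda$-constacyclic code is $(\lambda^{-1})$-constacyclic, so $\mathcal{C}^\perp$ is invariant under the shift $\tau_{\lambda^{-1}}$. I would start from a minimum-weight codeword $\mathbf{v}\in\mathcal{C}^\perp$, with $\mathrm{wt}(\mathbf{v})=d^\perp$. Since $\lambda^{-1}\neq0$, applying $\tau_{\lambda^{-1}}$ preserves Hamming weight and advances the support of a vector cyclically by one position (the wrapped coordinate is merely rescaled by $\lambda^{-1}$, hence stays nonzero). Thus each iterate $\tau_{\lambda^{-1}}^{\,t}(\mathbf{v})$ lies in $\mathcal{C}^\perp$, still has weight $d^\perp$, and has support equal to that of $\mathbf{v}$ shifted by $t$ modulo $n$. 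Given an arbitrary index $i\in[n]$, I would fix a position $j$ in the (nonempty) support of $\mathbf{v}$ and take $t\equiv i-j\pmod n$; then $\tau_{\lambda^{-1}}^{\,t}(\mathbf{v})$ is a codeword of $\mathcal{C}^\perp$ of weight $d^\perp=(d^\perp-1)+1$ whose support contains $i$. By the sufficient condition for locality stated in the preliminaries, the $i$-th symbol has locality $d^\perp-1$; as $i$ is arbitrary, $\mathcal{C}$ has locality $d^\perp-1$ and hence $r_{\min}\le d^\perp-1$, which combined with the first inequality yields equality.

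I do not anticipate a serious obstacle: the only delicate points are the bookkeeping that $\tau_{\lambda^{-1}}$ advances supports cyclically by exactly one and preserves weight, and the role of the hypothesis $d^\perp\ge2$. The hypothesis guarantees that $\mathbf{v}$ is nonzero so that a position $j$ exists, and more conceptually it rules out the degenerate case $d^\perp=1$: a weight-one dual codeword would, under the same shifting, place a scalar multiple of every standard basis vector in $\mathcal{C}^\perp$, forcing $\mathcal{C}=\{0\}$, for which locality is not meaningfully defined.
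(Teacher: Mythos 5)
Your proof is correct and follows essentially the same route as the paper's: both directions are obtained by shifting a minimum-weight dual codeword via the constacyclic structure of $\mathcal{C}^\perp$ to cover every coordinate, combined with the general bound $d^\perp \le r_{\min}+1$. Your version merely spells out the bookkeeping of the support shift in more detail than the paper does.
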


\begin{IEEEproof}
There exists a codeword $\bc$ of $\mathcal{C}^{\perp}$ with weight $d^{\perp}$. Since $\mathcal{C}$ is constacyclic, $\mathcal{C}^\perp$ is also constacyclic. Any  constacyclic shift of $\bc$ is a codeword of $\mathcal{C}^{\perp}$ with weight $d^\perp$. Thus, all code symbols has locality $d^{\perp}-1$. The minimum locality of $\mathcal{C}$ cannot be strictly less than $d^\perp-1$, because it would produce a codeword in $\mathcal{C}^\perp$ with weight strictly less than $d^\perp$. Thus $d^{\perp}-1$ is the minimum locality of $\mathcal{C}$.
\end{IEEEproof}

As a result, determining the minimum locality of $\mathcal{C}_i$ is equivalent to determining the minimum distance of the dual code $\mathcal{C}^\perp$. For constacyclic code, the Singleton-like bound in~\eqref{equation1} becomes
\begin{equation}
d \leq n - k - \left\lceil \frac{k}{d^\perp-1}\right\rceil + 2.
\label{eq:Singleton2}
\end{equation}

Using the characterization of minimum distance in Prop.~\ref{prop}, we can obtain several classes of optimal LRCs in Theorem~\ref{thm 5-2}.

\begin{theorem}\label{thm 5-2}
Let $\lambda_{0}$ be an element in $\mathbb{F}_{p^{m}}^*$. There are six classes of optimal repeated-root $\lambda$-constacyclic LRCs over $\mathbb{F}_{p^{m}}$ in $\frac{\mathbb{F}_{p^m}[x]}{\langle (x^{\eta} - \lambda_0)^{p^s} \rangle}$:
 \begin{enumerate}

\item[1.] \label{t2:e1}  $(2^{s},2^{s-1}-1,4,1)$ LRC $\mathcal{C}_{2^{s-1}+1}(1,2^s,\lambda_0)$ with $p=2$ and $s\geq 2$;

 \item [2.]\label{t2:e3}  $(p^{s},p^{s}-p^{s-1}-1,3,p-1)$ LRC $\mathcal{C}_{p^{s-1}+1}(1, p^s, \lambda_0)$,  where $s\geq 2$ and $p$ is an odd prime;

 \item [3.] \label{t2:e2} $(\eta p^{s},\eta (p^{s}-p^{s-\ell-1}),2,p^{\ell+1}-1)$ LRC $ \mathcal{C}_{p^{s-\ell-1}}(\eta,p^s,\lambda_0)$ with $\eta \geq 1$, $0\leq \ell\leq s-1$ and $s\geq 2$;

 \item [4.] \label{t2:e4} $(p^{s},p^{s}-2,2,p^{s}-p^{s-1}-1)$ LRC $\mathcal{C}_2(1,p^s,\lambda_0)$, where $s\geq 2$ and $p$ is an odd prime;


\item [5.]\label{t2:e7} $(p,p-t,t+1,p-t)$ LRC $\mathcal{C}_t(1,p,\lambda_0)$, where $2\leq t\leq p-1$ and $p$ is an odd prime.

 \item [6.]\label{t2:e6} $(p^{s},1,p^{s},1)$ LRC $\mathcal{C}_{p^s-1}(1,p^s,\lambda_0)$, where $s\geq 2$;

\end{enumerate}
When $x^\eta- \lambda_0$ is irreducible over $\mathbb{F}_{p^m}$ there is no other optimal LRC among the repeated-root constacyclic codes in $\frac{\mathbb{F}_{p^m}[x]}{\langle (x^{\eta} - \lambda_0)^{p^s} \rangle}$.
\end{theorem}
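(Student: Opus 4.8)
I would split the theorem into its two assertions — that the six families are optimal, and that, under irreducibility, they exhaust the optimal codes — and drive both with one preliminary computation of the minimum locality. Since $\mathcal C_i=\langle(x^\eta-\lambda_0)^i\rangle$ has check polynomial $(x^\eta-\lambda_0)^{p^s-i}$, taking the monic reciprocal shows $\mathcal C_i^{\perp}=\langle(x^\eta-\lambda_0^{-1})^{p^s-i}\rangle=\mathcal C_{p^s-i}(\eta,p^s,\lambda_0^{-1})$, a $\lambda^{-1}$-constacyclic code. Hence $d^{\perp}=d(\mathcal C_{p^s-i})$, which Prop.~\ref{prop} evaluates from the base-$p$ digits of $i$, and since $d^{\perp}\ge2$ for every $1\le i\le p^s-1$, Lemma~\ref{lemma2} gives the minimum locality $r=d^{\perp}-1$. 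Writing $\sigma$ and $\mu-1$ for the value and position of the leading base-$p$ digit of $i$, this is the closed form $r=(p-\sigma+1)p^{s-\mu}-1$; symmetrically \eqref{eq:dC2} expresses $d$ through the leading digit of $a:=p^s-i$. Together with $k=\eta(p^s-i)$ and $n-k=\eta i$, these turn \eqref{eq:Singleton2} into the single scalar optimality criterion
\[
\left\lceil\frac{\eta(p^s-i)}{r}\right\rceil=\eta i-d+2 .
\]

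\textbf{The six families.} For the first assertion the check is mechanical: for each family I substitute the prescribed index $i$, read $d$ and $r$ off the digit expansions of $a$ and $i$, and verify the displayed identity. The ceiling is exact whenever $r\mid k$ — as in Class~3, where $r=p^{\ell+1}-1$ divides $k=\eta p^{s-\ell-1}(p^{\ell+1}-1)$, and in Classes~1, 5 and 6 where $r=k$ or $r=1$ — and rounds up by exactly one in Classes~2 and 4, where $k/r$ lies strictly between two consecutive integers; in each case the right-hand side collapses to the tabulated $d$.

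\textbf{The converse.} Here I first use irreducibility of $x^\eta-\lambda_0$ to make the quotient a chain ring whose only ideals are the $\langle(x^\eta-\lambda_0)^i\rangle$; hence every constacyclic code is some $\mathcal C_i$, and it suffices to decide the displayed criterion for each $i$, the endpoints $i\in\{0,p^s\}$ being the degenerate full and zero codes. When $s=1$ every nontrivial $\mathcal C_i$ is MDS and meets the bound with $r=k$, producing Class~5. For $s\ge2$ I organize the remaining indices by the leading-digit position $\nu$ of $a=p^s-i$, which fixes the scale $d=(p-\tau+1)p^{s-\nu}$: the extremal scale $\nu=1$ forces $i$ just below $p^s$ and leaves only $a=1$, i.e.\ Class~6; the scale $\nu=s$ carries the smallest distances $d\in\{2,\dots,p\}$ and, after a secondary split on the leading digit $\tau$, yields Classes~3 and 4 (for $d=2$) and Class~2 (for $d=3$) while excluding $d\ge4$; the intermediate scales $2\le\nu\le s-1$ admit no optimal code except, in characteristic two, Class~1 (occurring at $\nu=s-1$).

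\textbf{Main obstacle.} The crux, and the step I expect to fight hardest, is the secondary analysis at each scale, which hinges entirely on the ceiling. Rewritten as an equivalence, optimality holds exactly when the quotient $k/r=\eta(p^s-i)/r$ lands in the half-open window $(\eta i-d+1,\ \eta i-d+2]$. Because $r=(p-\sigma+1)p^{s-\mu}-1$ is always one less than a multiple of a power of $p$, this quotient is generically non-integral, so I must pin down precisely which $i$ enter the window. For $i$ a power of $p$, say $i=p^j$, one has $r=p^{s-j}-1$ dividing $p^s-i=p^{j}(p^{s-j}-1)$, so $k/r=\eta p^{j}$ sits exactly at the upper endpoint for every $\eta$, giving Class~3; for the remaining candidate indices the quotient meets the window only at $\eta=1$, since for $\eta\ge2$ the extra factor in $k$ drives $k/r$ below the lower endpoint, which is exactly what eliminates Classes~1, 2, 4 and 6 unless $\eta=1$. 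The most delicate bookkeeping is at the scale transitions — in particular the boundary $i=p^{s-1}$ separating $d=2$ from $d=3$, and the isolated characteristic-two phenomenon behind Class~1 — where the leading digits of $i$ and of $p^s-i$ must be tracked simultaneously.
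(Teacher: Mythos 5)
Your proposal matches the paper's proof in both structure and substance: the six classes are verified exactly as the paper does (Prop.~\ref{prop} applied to $\mathcal{C}_i$ and to its dual $\mathcal{C}_{p^s-i}(\eta,p^s,\lambda_0^{-1})$, locality via Lemma~\ref{lemma2}, then a direct check of the Singleton-like bound), and your converse is the same chain-ring reduction followed by a case analysis of the ceiling identity over the index $i$, which the paper carries out in the appendix by partitioning $[1,p^s-1]$ into ranges that correspond to your leading-digit scales. The two pivots you single out --- exactness of $k/r$ at the upper endpoint of the window when $i$ is a power of $p$, and the elimination of $\eta\geq 2$ in all other cases --- are precisely the decisive steps in the paper's Lemmas~\ref{lm1}--\ref{lm4}.
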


\begin{remark} The codes in the Class 5 and Class 6 are trivial LRCs. Class 5 consists of maximal-distance separable (MDS) codes with $r=k$. When $t=p-1$, the code $\mathcal{C}_{p-1}(1,p,\lambda_0)$ is the trivial code that contains all vectors in $\mathbb{F}_p^p$. The code is Class 6 is the repetition code of length $p^s$ and dimension~1.
\end{remark}

\begin{remark}
If $x^{\eta }-\lambda_{0}$ is reducible over $\mathbb{F}_{p^{m}}$, the problem of determining all the generator polynomials of $\lambda$-constacyclic codes of length $\eta p^{s}$ becomes less tractable. Anyway, we still have the six classes of repeated-root constacyclic LRCs in Theorem~\ref{thm 5-2}.
\end{remark}

\begin{proof}[Proof of Theorem~\ref{thm 5-2}]
We check that each of the six classes attains the Singleton-like bound with equality. In each case we apply Prop.~\ref{prop} to the dual code $\mathcal{C}_i^\perp = \mathcal{C}_{p^s- i}(\eta, p^s, \lambda_0^{-1})$ to obtain the locality of $\mathcal{C}_i$. In the followings, denote the length of the $p$-ary expansion of an integer $a$ by $\nu(a)$, and the most significant nonzero digit by $\tau(a)$.

{\bf Class 1.} The dimension of $\mathcal{C}_{2^{s-1}+1}(1,2^s,\lambda_0)$ is $k=2^s -(2^{s-1}+1) = 2^{s-1}-1$. By Prop.~\ref{prop}, the minimum distance of $\mathcal{C}_{2^{s-1}+1}(1,2^s,\lambda_0)$ and $\mathcal{C}_{2^{s-1}+1}(1,2^s,\lambda_0)^\perp$ are
\begin{align*}
d &= 2\cdot 2^{s-\nu(2^s-(2^{s-1}+1))} =2\cdot 2^{s-\nu(2^{s-1}-1)} = 4,  \\
d^\perp & =2\cdot 2^{s-\nu(2^s-(2^{s-1}-1))} = 2\cdot 2^{s-\nu(2^{s-1}+1)}= 2,
\end{align*}
respectively. The locality of $\mathcal{C}_{2^{s-1}+1}(1,2^s,\lambda_0)$ is $d^\perp -1 = 1$.
It achieves the Singleton-like bound
$$
d = 4 = 2^s - (2^{s-1}-1 ) - \left\lceil \frac{2^{s-1}-1 }{1} \right\rceil + 2
$$
with equality.

{\bf Class 2.} Suppose $p$ is an odd prime and $s\geq 2$. The repeated-root constacyclic codes in this class has generator polynomial $(x-\lambda_0)^{p^{s-1}+1}$ and length $p^s$. From the $p$-ary expansion of
$$p^s - (p^{s-1}+1) = \sum_{j=0}^{s-2}(p-1)p^j +(p-2)p^{s-1},$$
we can calculate,
$$\tau(p^s - (p^{s-1}+1)) = p-2, \quad \nu(p^s - (p^{s-1}+1)) = s$$
and
$$d =  (p - (p-2)+1) p^{s-s} = 3$$
by Prop.~\ref{prop}. Also, we  have
$$
d^\perp = (p-\tau(p^{s-1}+1)+1) p^{s-\nu(p^{s-1}+1)} = (p-1+1)p^{s-s} =p.
$$
The locality in the class is therefore equal to $d^\perp-1=p-1$. One can verified that the Singleton-like bound
\begin{align*}
d &\leq p^s - (p^s-(p^{s-1}+1)) -  \left\lceil \frac{p^s-(p^{s-1}+1)}{p-1} \right\rceil +2\\
&= p^{s-1}+1 - \left\lceil \frac{p^{s-1}(p-1)-1}{p-1} \right\rceil + 2\\
&= 3
\end{align*}
is met with equality.

{\bf Class 3.} Let $\eta \geq 1$, $s\geq 2$ and $0\leq \ell \leq s-1$. The dimension of $\mathcal{C}_{p^{s-\ell-1}}(\eta,p^s,\lambda_0)$ is $k=\eta(p^s - p^{s-\ell-1})$. From
\begin{gather*}
\tau(p^s-p^{s-\ell-1}) = p-1 , \quad \nu(p^s-p^{s-\ell-1})  =s , \\
\tau(p^{s-\ell-1})=1, \quad \nu(p^{s-\ell-1})=s-\ell ,
\end{gather*}
we obtain the minimum distance of $\mathcal{C}_{p^{s-\ell-1}}(\eta,p^s,\lambda_0)$ and its dual by \eqref{eq:dC2},
\begin{align*}
d &=  (p-(p-1)+1) \cdot p^{s-s} = 2 \\
d^\perp & =  (p-1+1) \cdot p^{s-(s-\ell)} = p^{\ell+1}.
\end{align*}
The locality is thus $d^\perp-1 = p^{\ell+1}-1$. We check that the Singleton-like bound says that the minimum distance is bounded by 2,
\begin{align*}
d &\leq \eta p^s - \eta(p^s - p^{s-\ell-1}) - \left\lceil \frac{\eta(p^s - p^{s-\ell-1})}{p^{\ell+1}-1} \right\rceil + 2  \\
&= \eta p^{s-\ell-1} - \left\lceil \frac{\eta p^{s-\ell-1}(p^{\ell+1} - 1)}{p^{\ell+1}-1} \right\rceil + 2 .
\end{align*}

{\bf Class 4.} Suppose $p$ is an odd prime and $s\geq 2$. The codes in this class are generated by $(x-\lambda_0)^2$. We first verify that
\begin{gather*}
\tau(p^s-2) = p-1, \quad \nu(p^s-2) = s,\\
\tau(2) = 2, \quad \nu(2) = 1.
\end{gather*}
The minimum distance and the dual minimum distance are
\begin{align*}
d       &=  (p-(p-1)+1) \cdot p^{s-s} = 2, \\
d^\perp & =  (p-2+1) \cdot p^{s-1} = (p-1)p^{s-1},
\end{align*}
respectively, and the locality is $r=(p-1)p^{s-1}-1 = p^s - p^{s-1}-1$. The Singleton-like bound is met with equality,
$$
d \leq p^s - (p^s-2) - \left\lceil \frac{p^s-2}{p^s - p^{s-1}-1} \right\rceil + 2 = 2 - 2 + 2 = 2.
$$

{\bf Class 5.} The code length is $p$ and the generator polynomial is $(x-\lambda_0)^{t+1}$, where $p$ is an odd prime and $1\leq t \leq p-2$. We apply Prop.~\ref{prop} and determine the minimum distance and locality,
\begin{align*}
d &= (p-(p-t)+1)p^{1-1} = t+1 \\
d^\perp &= (p-t+1)p^{1-1} -1 = p-t+1.
\end{align*}
One can check that the locality $d^\perp - 1=p-t$ is equal to the dimension $k$ of $\mathcal{C}_t(1,p,\lambda_0)$, and it is an MDS code, i.e., $d = n-k+1 = p - (p-t) + 1 = t+1$. Therefore, this is an optimal LRC.

{\bf Class 6.} The codes in class 6 is the repetition codes of length $p^s$. This is a class of trivial LRCs.

\smallskip

If $x^\eta - \lambda_0$ is irreducible, a divisor of $x^{\eta p^s}-\lambda$ can be written as $(x^\eta - \lambda_0)^i$ for $i=0,1,\ldots, p^s$, and  the residue ring $\frac{\mathbb{F}_{p^{m}}[x]}{\langle x^{\eta p^{s}}-\lambda\rangle}$ is a chain ring. The ideals of  $\frac{\mathbb{F}_{p^{m}}[x]}{\langle x^{\eta p^{s}}-\lambda\rangle}$ are precisely the ideals generated by $(x^\eta -\lambda_0)^i$ for $i=0,1,2,\ldots, p^s$. The proof that there is no other optimal LRCs when $x^\eta - \lambda_0$ is irreducible is relegated to the appendix.
\end{proof}

\begin{remark}
 It is shown in \cite{s4} that the minimum distance of optimal linear LRCs with unbounded length has to be less than $5$.
The authors of \cite{s5} give a class of optimal LRCs
of distance $3$ and $4$ with unbounded length over fixed finite fields whose locality is greater than or equal to $ 3$.  The LRCs in Class~1 to Class~4 in Theorem \ref{thm 5-2} are examples of LRCs with ununbounded length and minimum distances 2, 3, and 4.
 \end{remark}

\begin{remark} Optimal binary LRCs are completely characterized in \cite{s6}. Class 1 in Theorem~\ref{thm 5-2}, belongs to one of the infinite families of LRCs in~\cite{s6} with minimum distance $d=4$ and locality $r=1$. Class~2 belongs to another infinite family in~\cite{s6} with $d=2$ and $r|k$. This shows that some optimal binary LRCs have the structure of repeated-root cyclic codes.
\end{remark}

\begin{example}
We apply Theorem~\ref{thm 5-2} with length $n = 3p^s$ and alphabet size $p^m=4$. Let $\omega$ be a primitive cube root of unity in $\mathbb{F}_4$. Class 2 in Theorem~\ref{thm 5-2} gives $s-1$ optimal LRCs $\mathcal{C}_{p^{s-k-1}}(3,p^s,\omega)$, for $k=1,2,\ldots, s-1$, and Class 5 gives one more optimal LRC $\mathcal{C}_{p^{s-1}}(3,2^s, \omega)$. Since $x^{3}-\omega$ is irreducible over $\mathbb{F}_4$. These $s$ LRCs are all the optimal LRCs that are ideals in $\frac{\mathbb{F}_4[x]}{\langle (x^3-\omega)^{2^s} \rangle}$. For example, when $s=4$, the four optimal LRCs $\mathcal{C}_{2^{3-k}}(3,16,\omega)$, for $k=0,1,2,3$, have parameters $(48,24,2,1)$, $(48,36,2,3)$, $(48,42,2,7)$ and $(48,45,2,15)$, respectively.
\end{example}

\begin{example}
Consider cyclic codes of code length $64$ over $\mathbb{F}_{2}$, according to Theorem \ref{thm 5-2}, there are exactly $8$ optimal cyclic codes of length $64$ over $\mathbb{F}_{2}$. 
The parameters of these $8$ optimal LRCs are $(64,63,2,63)$, $(64,62,2,31)$, $(64,60,2,15)$, $(64,56,2,7)$, $(64,48,2,3)$, $(64,32,2,1)$, $(64,31,4,1)$ and $(64,1,64,1)$. 
\end{example}

\section{Optimal Constacyclic LRCs of Length $2 p^{s}$}
\label{sec:eta2}
In this section we consider optimal constacyclic codes of length $2p^{s}$ over $\mathbb{F}_{p^{m}}$, which are the ideals of
$$\frac{\mathbb{F}_{p^{m}}[x]}{\langle (x^{2}-\lambda_0)^{p^{s}} \rangle},
$$
where $p$ is an odd prime and $\lambda_{0}$ is an element of $\mathbb{F}_{p^{m}}^*$. 
Note that $x^{2}-\lambda_{0}$ is irreducible over $\mathbb{F}_{p^{m}}$ if and only if $\lambda_{0}$ is a quadratic non-residue in $\mathbb{F}_{p^{m}}^*$. Actually, there are $\frac{p^{m}-1}{2}$ quadratic non-residues in $\mathbb{F}_{p^{m}}^{*}$.  When $x^2-\lambda_0$ is irreducible over $\mathbb{F}_{p^m}$, only the third class of constacyclic codes in Theorem~\ref{thm 5-2} with parameters $(2 p^{s},2 (p^{s}-p^{s-k-1}),2,p^{k+1}-1)$, for $k=0,1,2,\ldots, s-1$, are optimal LRCs.

\begin{example}
Suppose $p^{m}\equiv 3\bmod 4$. Then $-1$ is a quadratic non-residue in $\mathbb{F}_{p^{m}}$, and $x^2+1$ is irreducible over $\mathbb{F}_{p^m}$. For $0\leq k\leq s-1$, the code
$$\mathcal{C}_{p^s-k-1}(2,p^s,-1) = \langle (x^{2}+1)^{p^s-k-1}\rangle \subseteq \mathbb{F}_{p^{m}}[x]/\langle x^{2p^{s}}+1\rangle$$
  are optimal negacyclic LRCs over $\mathbb{F}_{p^{m}}$ with parameters $(2 p^{s},2 (p^{s}-p^{s-k-1}),2,p^{k+1}-1)$. For example, when the code length is $54 = 2\cdot 27$, we have three optimal negacyclic LRCs with generator polynomials, $x^{18}+1$, $x^6+1$ and $x^2+1$ in the ring $\mathbb{F}_{27}[x]/(x^{54}+1)$, and their code parameters are $(54,36,2,2)$, $(54,48,2,8)$ and $(54,52,2,26)$, respectively.
\end{example}

In the rest of this section we consider $\lambda_{0}$ that is a quadratic residue in $\mathbb{F}_{q}^{*}$, so that $x^2-\lambda_0$ is factorized as the product of two linear factors, 
$x^{2}-\lambda_{0}=(x-\delta)(x+\delta)$ for some $\delta \in \mathbb{F}_{p^m}^*$.
We denote $\lambda_0^{p^s}$ by $\lambda$. The $\lambda$-constacyclic codes of length $2p^s$ over $\mathbb{F}_{p^m}$ are ideals in the form
 $$\langle (x-\delta)^{i}(x+\delta)^{j}\rangle$$
in the residue ring $\frac{\mathbb{F}_{p^m}[x]}{\langle x^{2p^s}-\lambda\rangle}$,
for $0\leq i,j\leq p^s$.

\begin{definition} For odd prime $p$, a quadratic residue $\lambda_0$ with square root $\delta$ in $\mathbb{F}_{p^m}$, and $0\leq i,j\leq p^s$, let $\mathcal{C}_{i,j}(p^s,\lambda_{0})$ be the constacyclic code  over $\mathbb{F}_{p^m}$  associated with the ideal $\langle (x-\delta)^{i}(x+\delta)^{j}\rangle$ in $\mathbb{F}_{p^m}[x]/((x^2-\lambda_0)^{p^s})$.
\label{def2}
\end{definition}

Without loss of generality, we can assume $i<j$ in Definition~\ref{def2}. For the case of $i=j$, the generator polynomial of $\mathcal{C}_{i,i}(2p^s,\lambda_{0})$ can be represented as $(x^2-\delta)^i$, where $\delta$ is a quadratic non-residue, and this reduces to the scenario treated in Theorem~\ref{thm 5-2}.


\begin{proposition}[\cite{Permouth13} Remark 7.11] \label{prop2}
Let $p$ be an odd prime. For $0\leq i<j\leq p^{s}$, the code $\mathcal{C}_{i,j}(2p^s,\lambda_0)$ with length $n = 2 p^s$ has minimum distance
\begin{align}
d(\mathcal{C}_i) &= \text{min}\{2V_{i},2V_{i+1},\ldots,2V_{j-1},V_j,V_{j+1},\ldots, V_{p^s-1}\} \label{eq:d1} \\
&= \text{min}\{2(p-\tau_{\nu_1-1}+1)p^{s-\nu_1}, (p-\tau_{\nu_2-1}+1)p^{s-\nu_2}\}\label{eq:d2}
\end{align}
where $V_t$ is defined in \eqref{eq:V}, and
$\nu_1$ and $\tau_{\nu_1-1}$, $\nu_2$ and $\tau_{\nu_1-2}$ are computed  by
\begin{align*}
\nu_1 &= f(p,s,i) := \lfloor \log_p(p^s-i) \rfloor+1 , \\
\nu_2 &= f(p,s,j) := \lfloor \log_p(p^s-j) \rfloor+1 , \\
\tau_{\nu_1-1} &= g(p,s,i) := \left\lfloor \frac{p^s-i}{p^{\nu_1-1}}\right\rfloor = \left\lfloor \frac{p^s-i}{p^{f(p,s,i)-1}}\right\rfloor,\\
\tau_{\nu_2-1} &= g(p,s,j) := \left\lfloor \frac{p^s-j}{p^{\nu_2-1}}\right\rfloor = \left\lfloor \frac{p^s-j}{p^{f(p,s,j)-1}}\right\rfloor.
\end{align*}
\end{proposition}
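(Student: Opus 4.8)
The plan is to read off \eqref{eq:d1} directly from the general repeated-root formula of Prop~\ref{prop:Massey}, and then to pass to the closed form \eqref{eq:d2} using Prop~\ref{prop} together with a single elementary fact about minima.

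First I would apply Prop~\ref{prop:Massey} to the code of Definition~\ref{def2}, which has length $n = 2p^s$ (so $\eta = 2$) and generator polynomial $g(x) = (x-\delta)^i (x+\delta)^j$ with exactly two irreducible factors, $m_1(x) = x-\delta$ of multiplicity $i$ and $m_2(x) = x+\delta$ of multiplicity $j$. The heart of this step is to identify, for each $t \in \{0,1,\ldots,p^s-1\}$, the simple-root constacyclic code $\bar{\mathcal{C}}_t$ of length $2$ generated by the product of those $m_\ell$ whose multiplicity exceeds $t$. Since $i<j$ there are three regimes: for $t<i$ both multiplicities exceed $t$, so $\bar{\mathcal{C}}_t = \langle (x-\delta)(x+\delta)\rangle = \langle x^2-\lambda_0\rangle$ is the zero code and is excluded from the minimum; for $i \le t < j$ only $m_2$ survives, giving the $[2,1]$ code $\langle x+\delta\rangle$ whose nonzero codewords are the scalar multiples of $(\delta,1)$ and hence have weight $2$, so $d(\bar{\mathcal{C}}_t)=2$; and for $j \le t \le p^s-1$ no factor survives, so $\bar{\mathcal{C}}_t = \mathbb{F}_{p^m}^2$ with $d(\bar{\mathcal{C}}_t)=1$. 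Feeding these three cases into \eqref{eq:Massey} produces exactly the list $\min\{2V_i,\ldots,2V_{j-1},V_j,\ldots,V_{p^s-1}\}$ of \eqref{eq:d1}.

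For the passage to \eqref{eq:d2}, I set $A := \min\{V_i,\ldots,V_{j-1}\}$ and $B := \min\{V_j,\ldots,V_{p^s-1}\}$, so that \eqref{eq:d1} reads $d = \min\{2A,B\}$. Equations \eqref{eq:dC1}--\eqref{eq:dC2} of Prop~\ref{prop} evaluate any tail-minimum of the sequence $(V_t)$ in closed form: applied with starting index $j$ they give $B = (p-\tau_{\nu_2-1}+1)p^{s-\nu_2}$, the second entry of \eqref{eq:d2}, and applied with starting index $i$ they give the full-range minimum $\min\{V_i,\ldots,V_{p^s-1}\} = \min\{A,B\} = (p-\tau_{\nu_1-1}+1)p^{s-\nu_1}$, whose double is the first entry of \eqref{eq:d2}.

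I expect the one genuinely delicate point to be that the first entry of \eqref{eq:d2} involves the full-range minimum $\min\{A,B\}$ rather than the truncated minimum $A$ that literally occurs in \eqref{eq:d1}; these differ precisely when the global minimizer of $(V_t)_{t\ge i}$ falls in the tail $[j,p^s-1]$. The fix is the elementary identity $\min\{2A,B\} = \min\{2\min\{A,B\},B\}$, valid for positive reals: if $A \le B$ both sides equal $\min\{2A,B\}$, and if $A > B$ then $2A > B$ forces the left side to $B$ while the right side is $\min\{2B,B\} = B$. Thus the doubling renders the truncation harmless, and \eqref{eq:d2} follows from \eqref{eq:d1}. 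Apart from this observation the argument is routine bookkeeping with Prop~\ref{prop:Massey} and Prop~\ref{prop}.
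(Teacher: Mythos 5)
Your derivation of \eqref{eq:d1} is exactly the paper's argument: apply Prop.~\ref{prop:Massey} with $k=2$ to $g(x)=(x-\delta)^i(x+\delta)^j$ and read off $d(\bar{\mathcal{C}}_t)\in\{0,2,1\}$ on the three ranges $t<i$, $i\le t<j$, $t\ge j$. For \eqref{eq:d2} the paper simply omits the verification, and your reduction --- writing $d=\min\{2A,B\}$ with $A=\min\{V_i,\ldots,V_{j-1}\}$, $B=\min\{V_j,\ldots,V_{p^s-1}\}$, evaluating the tail minima via Prop.~\ref{prop}, and invoking the identity $\min\{2A,B\}=\min\{2\min\{A,B\},B\}$ to handle the fact that the closed form uses the full-range minimum rather than the truncated one --- correctly supplies the missing step.
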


\begin{IEEEproof}
Fix integers $i$, $j$ between 0 and $p^s$ and $i<j$. we apply Prop.~\ref{prop:Massey} with $k=2$ and the generator polynomial $g(x)$ factorized as $(x - \delta)^i(x+\delta)^j$. For $t<i$, the simple-root constacyclic code $\bar{\mathcal{C}}_i$ in Prop.~\ref{prop:Massey} is the zero code of length~$2$. For $i\leq t< j$, $\bar{\mathcal{C}}_i$ is the code $\langle x+\delta\rangle$ of length $2$ with minimum distance~$2$. For $t\geq j$, $\bar{\mathcal{C}}_i$ is the trivial code $\mathbb{F}_{p^m}^2$ with minimum distance $1$. By Prop.~\ref{prop:Massey}, the minimum distance of $\mathcal{C}_{i,j}(2p^s, \lambda_0)$ is the minimum of $2V_{i},2V_{i+1},\ldots,2V_{j-1},V_j,V_{j+1},\ldots, V_{p^s-1}$.
This proves equation~\eqref{eq:d1}. The proof of \eqref{eq:d2} follows from analyzing the numerical sequence $(V_t)_{t\geq 1}$ and is omitted.
\end{IEEEproof}

\begin{theorem}\label{thm 6-1}
Let $\lambda_{0}$ be a quadratic residue in $\mathbb{F}_{p^{m}}^*$ and $p$ be an odd prime. There are four classes of optimal repeated-root $\lambda$-constacyclic LRCs over $\mathbb{F}_{p^{m}}$ in $\frac{\mathbb{F}_{p^m}[x]}{\langle (x^{2} - \lambda_0)^{p^s} \rangle}$:
 \begin{enumerate}
 \item[1.] \label{t4:e1} $(2p^s,p^s,2,1)$ LRC $\mathcal{C}_{0,p^s}(2p^s,\lambda_0)$ with $s\geq1$;
 \item[2.] \label{t4:e2} $(2p^s,2p^s-p^{s-k-1},2,2p^{k+1}-1)$ LRC $\mathcal{C}_{0,p^{s-k-1}}(2p^s,\lambda_0)$, where $0\leq k\leq s-1$ and $s\geq 2$;
 \item[3.] \label{t4:e3} $( 2p^s,2p^s-2,2,2p^s-2p^{s-1}-1)$ LRC $\mathcal{C}_{0,2}(2p^s,\lambda_0)$ with $s\geq 2$;
 \item[4.] \label{t4:e4} $( 2p,p-i,2(i+1),1)$ LRC $\mathcal{C}_{i,p}(2p,\lambda_0)$, where $1\leq i\leq p-1$.
\end{enumerate}
\end{theorem}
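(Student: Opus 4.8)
The plan is to verify Theorem~\ref{thm 6-1} by exactly the strategy used for Theorem~\ref{thm 5-2}: for each of the four candidate codes I read off its dimension from the degree of the generator polynomial, compute its minimum distance $d$ directly from Prop.~\ref{prop2}, determine its locality through Lemma~\ref{lemma2} (which reduces locality to the dual minimum distance $d^\perp$), and then confirm that the resulting quadruple $(n,k,d,r)$ meets the Singleton-like bound \eqref{eq:Singleton2} with equality. Since $\mathcal{C}_{i,j}(2p^s,\lambda_0)$ has generator polynomial of degree $i+j$, its dimension is $k=2p^s-(i+j)$, which already matches the dimensions listed in the four classes.

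The one genuinely new ingredient beyond Theorem~\ref{thm 5-2} is to identify the dual of $\mathcal{C}_{i,j}(2p^s,\lambda_0)$. Using $x^{2p^s}-\lambda=(x-\delta)^{p^s}(x+\delta)^{p^s}$, the check polynomial of $\mathcal{C}_{i,j}$ is $(x-\delta)^{p^s-i}(x+\delta)^{p^s-j}$. Taking reciprocals sends $x-\delta$ to a nonzero scalar multiple of $x-\delta^{-1}$ and $x+\delta$ to a scalar multiple of $x+\delta^{-1}$, where $(\delta^{-1})^2=\lambda_0^{-1}$; choosing $-\delta^{-1}$ as the square root of $\lambda_0^{-1}$ interchanges the two linear factors, so that
\[
\mathcal{C}_{i,j}(2p^s,\lambda_0)^\perp = \mathcal{C}_{p^s-j,\,p^s-i}(2p^s,\lambda_0^{-1}).
\]
This is the two-root analogue of $\mathcal{C}_i^\perp=\mathcal{C}_{p^s-i}$ used earlier; note the two exponents swap places, which preserves the $i<j$ convention of Definition~\ref{def2}. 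Because the minimum distance in Prop.~\ref{prop2} depends only on the (ordered) exponent pair, the exact choice of square root is immaterial for the computation, and Prop.~\ref{prop2} applies verbatim to both $\mathcal{C}_{i,j}$ and its dual.

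With this in hand the four verifications are short. In Classes~1 ($i=0,j=p^s$), 2 ($i=0,j=p^{s-k-1}$) and 3 ($i=0,j=2$) the term $2V_0=2$ in \eqref{eq:d1} forces $d=2$, while the dual takes the form $\mathcal{C}_{p^s-j,\,p^s}(2p^s,\lambda_0^{-1})$ whose second block in \eqref{eq:d1} is empty, so $d^\perp=2\min\{V_{p^s-j},\ldots,V_{p^s-1}\}$; evaluating this minimum via the single-root formula \eqref{eq:dC2} applied to $p^s-(p^s-j)=j$ gives $d^\perp=2$, $d^\perp=2p^{k+1}$, and $d^\perp=2(p-1)p^{s-1}$ respectively. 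In Class~4 ($s=1$, $j=p$) every $V_t$ equals $t+1$, so $d=2\min\{V_i,\ldots,V_{p-1}\}=2(i+1)$ and the dual $\mathcal{C}_{0,\,p-i}(2p,\lambda_0^{-1})$ again yields $d^\perp=2$ from its $2V_0$ term. The locality $r=d^\perp-1$ then follows from Lemma~\ref{lemma2}, matching each listed value, and a short arithmetic check confirms equality in \eqref{eq:Singleton2}: the relevant ceilings land exactly on integers because $2p^s-p^{s-k-1}=p^{s-k-1}(2p^{k+1}-1)$ in Class~2 and $1<(2p^s-2)/(2p^s-2p^{s-1}-1)\le 2$ in Class~3, both valid for $p\ge 3$ and $s\ge 2$.

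I expect the dual-code identification to be the main obstacle to state cleanly, because of the exponent swap and the need to track which square root of $\lambda_0^{-1}$ carries which exponent; once that relation is fixed, the remaining work is bookkeeping. Two boundary effects deserve care: whenever $j=p^s$ (Classes~1, 4, and Class~2 with $k=0$) the block $V_j,\ldots,V_{p^s-1}$ in \eqref{eq:d1} is empty, so the minimum must be taken over the $2V_t$ terms alone, and symmetrically the dual may have its first exponent equal to $p^s$. After handling these cases the four computations are routine, exactly parallel to the Class~2 and Class~3 arguments of Theorem~\ref{thm 5-2}.
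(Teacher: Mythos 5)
Your proposal is correct and follows essentially the same route as the paper: compute $d$ from Prop.~\ref{prop2}, obtain the locality as $d^\perp-1$ via Lemma~\ref{lemma2} applied to the dual, and verify equality in the Singleton-like bound for each of the four classes; your formula $\mathcal{C}_{i,j}^\perp=\mathcal{C}_{p^s-j,\,p^s-i}(2p^s,\lambda_0^{-1})$ is exactly what the paper's dual-distance expression encodes (the paper handles the $i=0$ and $j=p^s$ boundary terms with the convention $\tau(0)=-\infty$ rather than by noting the empty block, but the computations agree). The only difference is presentational: you derive the exponent-swapping dual identification explicitly from the reciprocal of the check polynomial, which the paper leaves implicit.
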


\begin{IEEEproof}
We check that the LRCs in each of the four classes attain the Singleton-like bound with equality. Prop.~\ref{prop2} is invoked to obtain  the locality of $\mathcal{C}_{i,j}$ via the dual code. In the followings, denote the length of the $p$-ary expansion of an integer $a$ by $\nu(a)$, and the most significant nonzero digit by $\tau(a)$. We denote $\tau(0)=-\infty$. According to Proposition \ref{prop2}, we have the minimum distances of $\mathcal{C}_{i,j}$ and $C_{i,j}^\perp$ are
\begin{align}
  d & =\text{min}\{2(p-\tau(p^s-i)+1)p^{s-\nu(p^s-i)}, (p-\tau(p^s-j)+1)p^{s-\nu(p^s-j)}\} \label{th8:e1}\\
  d^{\perp} & = \text{min}\{2(p-\tau(j)+1)p^{s-\nu(j)}, (p-\tau(i)+1)p^{s-\nu(i)}\} \label{th8:e2}
\end{align}

{\bf Class 1.} The dimension of $\mathcal{C}_{0,p^s}(2p^s,\lambda_0)$ is $2p^s -p^s = p^s$. By \eqref{th8:e1} and \eqref{th8:e2}, the minimum distance of $\mathcal{C}_{0,p^s}(2p^s,\lambda_0)$ and $\mathcal{C}_{0,p^s}(2p^s,\lambda_0)^\perp$ are
\[
   d =\text{min}\{2(p-\tau(p^s)+1)p^{s-\nu(p^s)}, (p-\tau(0)+1)p^{s-\nu(0)}\}=2,
\]
\[
 d^{\perp}  = \text{min}\{2(p-\tau(p^s)+1)p^{s-\nu(p^s)}, (p-\tau(0)+1)p^{s-\nu(0)}\}=2,
\]
respectively. The locality of $\mathcal{C}_{0,p^s}(2p^s,\lambda_0)$ is $d^\perp -1 = 1$.
It achieves the Singleton-like bound
$$
d = 2 = 2p^s - p^s - \left\lceil \frac{p^s}{1} \right\rceil + 2
$$
with equality.

{\bf Class 2.} Suppose $s\geq 2$. The repeated-root constacyclic codes in this class has generator polynomial $(x+\delta)^{p^{s-k-1}}$ and length $2p^s$. From the $p$-ary expansion of
$$p^s - p^{s-k-1} = \sum_{j=s-k-1}^{s-1}(p-1)p^j,$$
we can calculate,
\begin{align*}
  \tau(p^s - p^{s-k-1}) = p-1,&  \quad \nu(p^s - p^{s-k-1}) = s, \\
  \tau(p^{s-k-1})= 1,&  \quad \nu(p^{s-k-1}) = s-k.
\end{align*}
According to \eqref{th8:e1} and \eqref{th8:e2}, we have
 the minimum distance of $\mathcal{C}_{0,p^{s-k-1}}(2p^s,\lambda_0)$ and $\mathcal{C}_{0,p^{s-k-1}}(2p^s,\lambda_0)^\perp$ are
 \begin{align*}
   d = & \text{min}\{2(p-\tau(p^s)+1)p^{s-\nu(p^s)}, (p-\tau(p^s - p^{s-k-1})+1)p^{s-\nu(p^s - p^{s-k-1})}\}=2, \\
   d^{\perp}  =  & \text{min}\{2(p-\tau(p^{s-k-1})+1)p^{s-\nu(p^{s-k-1})}, (p-\tau(0)+1)p^{s-\nu(0)}\}=2p^{k+1},
 \end{align*}
respectively. The locality of $\mathcal{C}_{0,p^{s-k-1}}(2p^s,\lambda_0)$ is $d^\perp -1 = 2p^{k+1}-1$.
One can verified that the Singleton-like bound
\begin{align*}
d &\leq 2p^s - (2p^s - p^{s-k-1}) -  \left\lceil \frac{2p^s - p^{s-k-1}}{2p^{k+1}-1} \right\rceil +2\\
&= p^{s-k-1}- \left\lceil \frac{(p^{s-k-1})(2p^{k+1}-1)}{2p^{k+1}-1)} \right\rceil + 2\\
&= 2
\end{align*}
is met with equality.

{\bf Class 3.}
Let $p$ be an odd prime. The dimension of $\mathcal{C}_{0,2}(2p^s,\lambda_0)$ is $k=2p^s-2$. From the $p$-ary expansion of
$$p^s - 2 = \sum_{j=1}^{s-1}(p-1)p^j+p-2,$$
we can calculate,
\begin{align*}
  \tau(p^s - 2) = p-1,&  \quad \nu(p^s -2) = s, \\
  \tau(2)= 2,&  \quad \nu(2) = 1.
\end{align*}
According to \eqref{th8:e1} and \eqref{th8:e2}, we have
 the minimum distance of $\mathcal{C}_{0,2}(2p^s,\lambda_0)$ and $\mathcal{C}_{0,2}(2p^s,\lambda_0)^\perp$ are
 \begin{align*}
   d = & \text{min}\{2(p-\tau(p^s)+1)p^{s-\nu(p^s)}, (p-\tau(p^s - 2)+1)p^{s-\nu(p^s - 2)}\}=2, \\
   d^{\perp}  =  & \text{min}\{2(p-\tau(2)+1)p^{s-\nu(2)}, (p-\tau(0)+1)p^{s-\nu(0)}\}=2p^{s}-2p^{s-1},
 \end{align*}
respectively. The locality of $\mathcal{C}_{0,p^{s-k-1}}(2p^s,\lambda_0)$ is $d^\perp -1 = 2p^{s}-2p^{s-1}-1$.
We check that the Singleton-like bound says that the minimum distance is bounded by 2,
\begin{align*}
d &\leq 2 p^s - (2p^s -2) - \left\lceil \frac{2p^s -2}{2p^{s}-2p^{s-1}-1} \right\rceil + 2  \\
&= 2- \left\lceil 1+\frac{2p^{s-1}-1}{2(p-1)p^{s-1}-1} \right\rceil + 2 .
\end{align*}
{\bf Class 4.} The dimension of $\mathcal{C}_{i,p}(2p,\lambda_0)$ is $p-i$. By \eqref{th8:e1} and \eqref{th8:e2}, the minimum distance of $\mathcal{C}_{i,p}(2p,\lambda_0)$ and $\mathcal{C}_{i,p}(2p,\lambda_0)^\perp$ are
\[
   d =\text{min}\{2(p-\tau(p-i)+1)p^{s-\nu(p-i)}, (p-\tau(0)+1)p^{s-\nu(0)}\}=2(i+1),
\]
\[
 d^{\perp}  = \text{min}\{2(p-\tau(p)+1)p^{s-\nu(p)}, (p-\tau(i)+1)p^{s-\nu(i)}\}=2,
\]
respectively. The locality of $\mathcal{C}_{i,p}(2p,\lambda_0)$ is $d^\perp -1 = 1$.
It achieves the Singleton-like bound
$$
d = 2i+2 = 2p-(p-i) - \left\lceil \frac{p-i}{1} \right\rceil + 2
$$
with equality.
\end{IEEEproof}

\section{Irreducible Binomial $x^{\eta }-\lambda_{0}$  over $\mathbb{F}_{p^{m}}$}
\label{sec:41}

In Theorem~\ref{thm 5-2}, we have a complete characterization of optimal LRCs among the repeated-root constacyclic codes of length $\eta p^s$ when $x^\eta - \lambda_0$ is irreducible over $\mathbb{F}_{p^m}$. In this section we determine all the values of $\eta$ such that $x^\eta - \lambda_0$ that is irreducible over $\mathbb{F}_{p^{m}}$.


For an element $a\in \mathbb{F}_{p^{m}}^{*}$, the \emph{order} of $a$ is defined as the smallest positive integer $e$ such that $a^{e}=1$. Since the multiplicative subgroup of $\mathbb{F}_{p^m}$ is cyclic, the order of any nonzero element in $\mathbb{F}_{p^m}$ is a divisor of $p^m-1$. Conversely, given any divisor of $p^m-1$, we can find an element in $\mathbb{F}_{p^m}^*$ whose order is equal to the divisor. An element in $\mathbb{F}_{p^m}^*$ with order $p^m-1$ is called a {\em primitive element} of $\mathbb{F}_{p^m}$.

The following proposition gives a criterion for the irreducibility of  $x^{\eta}-\lambda_{0}$ over $\mathbb{F}_{p^{m}}$.

\begin{proposition}[\cite{s1} Thm 3.75]  \label{t1}
Let $\eta\geq 2$ be an integer and $a\in\mathbb{F}_{p^{m}}^{*}$. Then the binomial $x^\eta-a$ is irreducible in $\mathbb{F}_{p^{m}}[x]$ if and only if the following two conditions are satisfied:
\begin{enumerate}
\item[(i)] each prime factor of $\eta$ divides the order $e$ of $a$ in $\mathbb{F}_{p^{m}}^{*}$, but not $(p^{m}-1)/e$;
\item[(ii)]  $p^{m}\equiv 1\bmod 4$ if $\eta\equiv 0\bmod 4$.
\end{enumerate}
\end{proposition}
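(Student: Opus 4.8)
The plan is to deduce Proposition~\ref{t1} from the classical criterion for irreducibility of binomials over an arbitrary field, and then to translate its hypotheses into statements about the multiplicative order $e$ of $a$. Recall the general theorem (see, e.g., Lang, \emph{Algebra}): for a field $F$ and an integer $\eta\geq 2$, the binomial $x^{\eta}-a$ is irreducible over $F$ if and only if (A) $a$ is not a $d$-th power in $F$ for every prime $d\mid\eta$, and (B) whenever $4\mid\eta$ one also has $a\notin -4F^{4}$, where $F^{4}=\{b^{4}:b\in F\}$. First I would record both directions of this theorem. Necessity comes from explicit factorizations: if $a=c^{d}$ with $d\mid\eta$, then $x^{\eta/d}-c$ is a proper factor of $x^{\eta}-a$; and if $a=-4b^{4}$ with $4\mid\eta$, then the Sophie Germain identity $Y^{4}+4b^{4}=(Y^{2}-2bY+2b^{2})(Y^{2}+2bY+2b^{2})$, applied with $Y=x^{\eta/4}$, produces a proper factor. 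Sufficiency is the standard, and harder, degree argument establishing $[F(\alpha):F]=\eta$ for a root $\alpha$.

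Second, I would specialize to $F=\mathbb{F}_{p^{m}}$, exploiting that $\mathbb{F}_{p^{m}}^{*}$ is cyclic of order $p^{m}-1$. Fix a generator $g$ and write $a=g^{k}$, so that $e=\mathrm{ord}(a)=(p^{m}-1)/\gcd(k,p^{m}-1)$, equivalently $(p^{m}-1)/e=\gcd(k,p^{m}-1)$. For a prime $d$, the element $a$ is a $d$-th power if and only if $d\,j\equiv k\pmod{p^{m}-1}$ is solvable, i.e. if and only if $\gcd(d,p^{m}-1)\mid k$. The key step is to show that condition (A) for all primes $d\mid\eta$ is exactly condition (i). When $d\nmid(p^{m}-1)$ the element $a$ is automatically a $d$-th power, while at the same time $d\nmid e$ (since $e\mid p^{m}-1$), so both (A) and (i) fail together. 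When $d\mid(p^{m}-1)$ one has ``$a\notin F^{d}$'' $\iff$ $d\nmid\gcd(k,p^{m}-1)=(p^{m}-1)/e$; and since $d$ is prime, $d\mid(p^{m}-1)=e\cdot\bigl((p^{m}-1)/e\bigr)$ together with $d\nmid(p^{m}-1)/e$ forces $d\mid e$. Both outcomes are captured precisely by ``$d\mid e$ and $d\nmid(p^{m}-1)/e$,'' which is condition (i).

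Third, I would translate condition (B) into condition (ii); this is where I expect the main obstacle to lie, since it hinges on the interaction between quadratic and quartic residues. Condition (B) is only in force when $4\mid\eta$, in which case $2\mid\eta$, so condition (i) already guarantees that $a$ is a non-square. In characteristic two every element is a square, so (i) fails at $d=2$ and both sides reduce to non-irreducibility; hence the coset analysis may assume $p$ odd. I would then split on $p^{m}\bmod 4$. If $p^{m}\equiv 1\pmod 4$, then $-1$ and $4$ are squares, so $-4$ is a square and $-4F^{4}\subseteq(\mathbb{F}_{p^{m}}^{*})^{2}$; hence every non-square automatically avoids $-4F^{4}$, and (B) follows from (i), imposing nothing new. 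If $p^{m}\equiv 3\pmod 4$, then $-1$ is a non-square, so $-4$ is a non-square, while $F^{4}=(\mathbb{F}_{p^{m}}^{*})^{2}$ has index $2$; consequently $-4F^{4}$ is exactly the set of non-squares, and the non-square $a$ necessarily lies in $-4F^{4}$, violating (B). Thus, given (i), condition (B) holds if and only if $p^{m}\equiv 1\pmod 4$, which is precisely condition (ii). Combining the three steps yields the stated equivalence.

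The principal difficulties I anticipate are twofold: establishing the sufficiency half of the general binomial criterion (the degree computation $[F(\alpha):F]=\eta$), and the $2$-primary subtleties of the third step, where the Sophie Germain factorization and the precise description of the coset $-4F^{4}$ must be reconciled with the residue of $p^{m}$ modulo $4$. By contrast, the odd-prime part of the translation in the second step is routine valuation bookkeeping once the power-residue criterion $\gcd(d,p^{m}-1)\mid k$ is in hand.
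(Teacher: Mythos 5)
The paper offers no proof of Proposition~\ref{t1} at all: it is quoted verbatim from \cite{s1} (Lidl--Niederreiter, Theorem~3.75), so there is nothing in the paper to compare your argument against line by line. Your proposal is a correct derivation of the statement from the general Capelli-type criterion for binomials over an arbitrary field, and all three reduction steps check out. In Step~2, the equivalence ``$a\in F^{d}$ $\iff$ $\gcd(d,p^{m}-1)\mid k$'' is right, the case $d\nmid p^{m}-1$ correctly makes (A) and (i) fail simultaneously, and in the case $d\mid p^{m}-1$ the primality of $d$ does force $d\mid e$ from $d\nmid (p^{m}-1)/e$, so (A) for all primes dividing $\eta$ is exactly (i). In Step~3 the key group-theoretic fact is that for $p^{m}\equiv 3\bmod 4$ the subgroup of squares has odd order, whence $(\mathbb{F}_{p^{m}}^{*})^{4}=(\mathbb{F}_{p^{m}}^{*})^{2}$ and $-4(\mathbb{F}_{p^{m}}^{*})^{4}$ is precisely the non-square coset; you use this correctly, and the characteristic-two degenerate case is handled. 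The only ingredient you do not supply is the sufficiency half of the general criterion (the degree computation $[F(\alpha):F]=\eta$), which you explicitly flag and defer to a standard reference; since the paper defers the \emph{entire} proposition to a reference, your write-up is strictly more self-contained than the source. If you wanted to close that last gap you would need the standard induction on the prime factorization of $\eta$ using norms, but as a justification of the proposition as used in this paper, what you have is sound.
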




Suppose $x^\eta - \lambda_0$ is irreducible over $\mathbb{F}_{p^m}$.
From condition (i) in Proposition~\ref{t1}, each prime factor of of $\eta$ should divide $p^m-1$ because the order of $\lambda_0$ is a divisor of $p^m-1$. Consider the prime factorization of $p^m-1$,
\begin{equation}
p^m-1 = 2^{\ell_0} p_1^{\ell_1} p_2^{\ell_2} \cdots p_s^{\ell_s},
\label{eq:pm-1}
\end{equation}
where $p_1,\ldots, p_s$ are distinct odd prime divisors of $p^m-1$,  $\ell_i$ are positive integers for $i=1,2,\ldots, s$, and $\ell_0\geq 0$. If we factorize $\eta$, the prime factorization of $\eta$ can be written as
\begin{equation}
\eta = 2^{e_0} p_1^{e_1} p_2^{e_2} \cdots p_s^{e_s},
\label{eq:eta_factorization}
\end{equation}
with $e_i \geq 0$ for $i=0,1,2,\ldots, s$.

The next proposition is an immediate consequence of Prop.~\ref{t1}.

\begin{proposition}
Let $\xi$ be a primitive element in $\mathbb{F}_{p^m}$. Suppose $p^m-1$ is factorized as in \eqref{eq:pm-1}. The binomial $x^\eta - \xi $ is irreducible over $\mathbb{F}_{p^m}$ if and only if $\eta$ can be expressed as in \eqref{eq:eta_factorization} with $e_i\geq 0$ for $i=1,2,\ldots, s$ and
\begin{enumerate}
\item[(i)] $e_0 \geq 0$ if $\ell_0 \geq 2$;

\item[(ii)] $e_0 \in \{0,1\}$ if $\ell_0=1$;

\item[(iii)] $e_0=0$ if $\ell_0 = 0$.
\end{enumerate}
\label{prop:factor_primitive}
\end{proposition}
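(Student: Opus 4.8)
The plan is to specialize the general irreducibility criterion in Proposition~\ref{t1} to the case $a=\xi$ and then unwind its two conditions into constraints on the exponents $e_0,e_1,\ldots,e_s$ in \eqref{eq:eta_factorization}. Since $\xi$ is primitive, its order is $e=p^m-1$, so that $(p^m-1)/e=1$. This immediately trivializes half of condition~(i): no prime can divide $1$, so the clause ``but not $(p^m-1)/e$'' holds automatically. What remains of condition~(i) is the single requirement that every prime factor of $\eta$ divides $e=p^m-1$. Because each step of the translation below is itself an equivalence, the ``if and only if'' of the proposition will follow simultaneously from the ``if and only if'' of Proposition~\ref{t1}.

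First I would observe that the surviving part of condition~(i) --- that every prime factor of $\eta$ divides $p^m-1$ --- is exactly the assertion that $\eta$ admits an expression of the form \eqref{eq:eta_factorization} involving only the primes $2,p_1,\ldots,p_s$. For the odd primes this imposes no restriction on the exponents, since each $p_i$ already divides $p^m-1$ by construction; the sole arithmetic content is that the prime $2$ may occur in $\eta$ (i.e.\ $e_0\geq 1$) only when $2\mid p^m-1$, equivalently $\ell_0\geq 1$. Thus condition~(i) reduces to: $\eta$ has the shape \eqref{eq:eta_factorization} with $e_i\geq 0$ for $i\geq 1$, subject to the implication $e_0\geq 1\Rightarrow \ell_0\geq 1$.

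Next I would translate condition~(ii). One has $\eta\equiv 0\bmod 4$ precisely when $e_0\geq 2$, and $p^m\equiv 1\bmod 4$ precisely when $4\mid p^m-1$, i.e.\ when $\ell_0\geq 2$; hence condition~(ii) reads ``$e_0\geq 2\Rightarrow \ell_0\geq 2$.'' Combining this with the constraint on $2$ extracted from condition~(i), I would split into three cases on the value of $\ell_0$. If $\ell_0=0$, then $2\nmid p^m-1$ forces $e_0=0$, giving item~(iii). If $\ell_0=1$, condition~(i) permits $e_0\geq 1$ while condition~(ii) forbids $e_0\geq 2$, leaving $e_0\in\{0,1\}$, giving item~(ii). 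If $\ell_0\geq 2$, condition~(ii) holds for every $e_0$ and condition~(i) imposes no upper bound, so $e_0\geq 0$ is free, giving item~(i).

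There is no genuine obstacle here: the result is a direct bookkeeping consequence of Proposition~\ref{t1} once the order of $\xi$ is identified as $p^m-1$. The only points warranting care are the correct matching of the divisibility conditions ($\eta\equiv 0\bmod 4\Leftrightarrow e_0\geq 2$ and $p^m\equiv 1\bmod 4\Leftrightarrow \ell_0\geq 2$), and the boundary value $\eta=1$ (all exponents zero), which lies outside the hypothesis $\eta\geq 2$ of Proposition~\ref{t1}. For $\eta=1$ the binomial $x-\xi$ is linear and hence trivially irreducible, and one checks directly that the exponent conditions~(i)--(iii) are vacuously satisfied, so the stated equivalence still holds in that degenerate case.
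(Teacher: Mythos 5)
Your argument is correct and is precisely the intended one: the paper offers no written proof, stating only that the proposition is an immediate consequence of Proposition~\ref{t1}, and your specialization to $e=p^m-1$ (which trivializes the ``but not $(p^m-1)/e$'' clause) followed by the case split on $\ell_0$ is exactly that deduction spelled out. Your explicit handling of the boundary case $\eta=1$, which falls outside the hypothesis $\eta\geq 2$ of Proposition~\ref{t1}, is a careful addition the paper omits.
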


\begin{example}\label{ex 5-1}
Consider the quaternary field $\mathbb{F}_{4}$ with four elements $\{0,1,\omega,\overline{\omega} \}$. Here $\overline{\omega}=1+\omega$ and $\omega^2+\omega+1=0$. The element $\omega$ is a primitive element of $\mathbb{F}_4$. By Prop.~\ref{prop:factor_primitive}, $x^\eta-\omega$ is irreducible over $\mathbb{F}_4$ if and only if $\eta = 3^\mu$ for some positive integer $\mu$.
\end{example}

The next proposition generalizes Prop.~\ref{prop:factor_primitive}.

\begin{proposition}\label{pro}
Let $\mathbb{F}_{p^{m}}$ be a finite field and $\omega\in\mathbb{F}_{p^m}^*$ is an element with order $f$. Suppose that the prime factorization of $p^m-1$ is  given as in~\eqref{eq:pm-1}. For notational convenience, we define $p_0:=2$. Let $S$ be a subset of $\{0,1,\ldots, s\}$ defined by
$$
S := \{ i\in[s] :\,  p_i^{\ell_i} \text{ divides } f \},
$$
Then $x^\eta - \omega$ is irreducible over $\mathbb{F}_{p^m}$ if and only if $\eta$ can be written as
$$
\eta = \prod_{i\in S} p_i^{e_i}
$$
with $e_i\geq 0$ for $i\in S\setminus\{0\}$, and
\begin{enumerate}
\item[(i)] $e_0 \geq 0$ if $p^m-1 \equiv 0 \bmod 4$;
\item[(ii)] $e_0 \in \{0,1\}$ if $p^m-1 \equiv 2 \bmod 4$;
\item[(iii)] $e_0 = 0$ if $p^m-1$ is odd.
\end{enumerate}
When $S$ is empty, the product $\prod_{i\in S} p_i^{e_i}$ is equal to 1 by convention.
\end{proposition}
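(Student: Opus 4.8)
The plan is to derive the characterization directly from the Lidl--Niederreiter criterion in Proposition~\ref{t1}, applied with $a = \omega$, whose order is $f$. That criterion has two parts: condition~(i), which constrains the \emph{prime factors} of $\eta$ relative to $f$ and $(p^m-1)/f$, and condition~(ii), which is a separate constraint singling out the prime $2$. I would translate each part into a condition on the exponents $e_i$ in the factorization $\eta = \prod p_i^{e_i}$, and then reassemble the results into the three cases of the statement. Throughout it is convenient to write $f = \prod_{i=0}^{s} p_i^{a_i}$ with $p_0 = 2$ and $0 \le a_i \le \ell_i$, which is legitimate because $f$ divides $p^m-1$ and $p^m-1$ is factored as in \eqref{eq:pm-1}; then $(p^m-1)/f = \prod_{i=0}^{s} p_i^{\ell_i - a_i}$.

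First I would analyze condition~(i), that every prime factor of $\eta$ divides $f$ but not $(p^m-1)/f$. The first observation is that any prime factor $q$ of $\eta$ must divide $p^m-1$: since $f \mid p^m-1$, a prime not dividing $p^m-1$ cannot divide $f$ and would violate condition~(i); hence every prime factor of $\eta$ lies in $\{p_0, p_1, \ldots, p_s\}$. Next, for a fixed $p_i$, I would read off from the valuations that ``$p_i$ divides $f$ but not $(p^m-1)/f$'' is equivalent to $a_i \ge 1$ together with $\ell_i - a_i = 0$, i.e. to $a_i = \ell_i \ge 1$; and since $a_i \le \ell_i$ always holds, this is precisely $p_i^{\ell_i} \mid f$, which is the membership $i \in S$. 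Therefore condition~(i) holds if and only if every prime factor of $\eta$ has the form $p_i$ with $i \in S$, i.e. if and only if $\eta = \prod_{i\in S} p_i^{e_i}$ for some exponents $e_i \ge 0$ (with no constraint on the multiplicities, since condition~(i) only concerns which primes occur). This already recovers the shape of $\eta$ claimed in the statement, and it specializes to Proposition~\ref{prop:factor_primitive} when $\omega$ is primitive, so that $f = p^m-1$ and $S = \{0,1,\ldots,s\}$.

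It remains to incorporate condition~(ii), that $4 \mid \eta$ forces $p^m \equiv 1 \bmod 4$, i.e. $\ell_0 \ge 2$. Noting that $\ell_0$ is the exponent of $2$ in $p^m-1$ and that the exponent of $2$ in $\eta = \prod_{i\in S} p_i^{e_i}$ is exactly $e_0$ (understood as $0$ when $0 \notin S$), condition~(ii) becomes the implication $e_0 \ge 2 \Rightarrow \ell_0 \ge 2$. I would then split on $\ell_0$. If $\ell_0 \ge 2$ ($p^m-1 \equiv 0 \bmod 4$), the implication is vacuously true, $e_0$ is unconstrained, and we are in case~(i). If $\ell_0 = 1$ ($p^m-1 \equiv 2 \bmod 4$), the implication forces $e_0 \le 1$, giving $e_0 \in \{0,1\}$ as in case~(ii). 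If $\ell_0 = 0$ ($p^m-1$ odd), then $2 \nmid f$, so already by condition~(i) the prime $2$ cannot divide $\eta$, forcing $e_0 = 0$ as in case~(iii). Combining the restriction from condition~(i) with these three alternatives yields exactly the stated characterization, the convention that the empty product equals $1$ handling $S = \emptyset$; since every step above is an equivalence, the ``if and only if'' is preserved.

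The one place demanding care --- and the thing I would most want to get exactly right --- is the bookkeeping for the prime $p_0 = 2$, since it is the only exponent in \eqref{eq:pm-1} allowed to vanish and it is the one isolated by condition~(ii). In particular I would scrutinize the degenerate case $\ell_0 = 0$: there $2^{\ell_0} = 1$ divides $f$ trivially, so $0 \in S$ by the definition of $S$, yet condition~(i) nevertheless forbids $2 \mid \eta$; the statement resolves this cleanly by pinning $e_0 = 0$ in case~(iii). I would also confirm that no analogous subtlety arises for the odd indices $i \ge 1$, since $\ell_i \ge 1$ holds by the very definition of the $p_i$ as the odd prime divisors of $p^m-1$, so ``$a_i = \ell_i$'' automatically entails ``$a_i \ge 1$'' and the two halves of condition~(i) merge without a separate case.
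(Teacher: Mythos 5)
Your proposal is correct, and it follows the only route the paper itself intends: the paper states Proposition~\ref{pro} without proof, presenting it (like Proposition~\ref{prop:factor_primitive}) as a direct consequence of the Lidl--Niederreiter criterion in Proposition~\ref{t1}, which is precisely the valuation bookkeeping you carry out. Your explicit handling of the degenerate case $\ell_0=0$ (where $0\in S$ trivially yet condition~(i) of Proposition~\ref{t1} still forbids $2\mid\eta$) is exactly the point that needs care, and you resolve it correctly; the only cosmetic omission is that Proposition~\ref{t1} formally requires $\eta\geq 2$, so the case $\eta=1$ should be noted as trivially consistent.
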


We remark that when the set $S$ in Prop.~\ref{pro} is not empty then there are infinitely many $\eta$ such that the binomial $x^\eta - \lambda_0$ is irreducible over $\mathbb{F}_{p^m}$.



\begin{example}
Consider a finite field $\mathbb{F}_{64}$ and let $\alpha$ be a primitive element in $\mathbb{F}_{64}$. The size of the multiplicative subgroup is odd and can be factorized as $63=3^2\cdot 7$. The element $\omega = \alpha^{21}$ has order~3. By Prop.~\ref{pro}, the only binomial in the form $x^\eta - \omega$ that is irreducible over $\mathbb{F}_{64}$ is $x-\omega$, because the corresponding set $S$ in Prop.~\ref{pro} is empty. On the other hand, the element $\omega'=\alpha^7$ has order 9. The binomial $x^{3^\mu}-\omega'$ is irreducible over $\mathbb{F}_{64}$ for $\mu\geq 0$.
\end{example}

The next example is an example with $p^m-1 = 2 \bmod 4$.

\begin{example}
Let $\xi$ be a primitive element of $\mathbb{F}_{7^3}= \mathbb{F}_{343}$. We have the factorization
$$7^3-1=2\cdot 3^2 \cdot 19.
$$
The element $\xi^{19}$ has order 18.
 By Prop.~\ref{pro}, the binomial $x^\eta - \xi^{19}$ is irreducible over $\mathbb{F}_{27}$ if and only if $\eta = 2(13)^\mu$ or $\eta=(13)^\mu$ for $\mu\geq 0$.
\end{example}

\section{Concluding Remarks}
\label{sec:conclusion}
Constructing optimal linear codes with long code length and small locality is an important code construction problem for locally repairable code. In this paper, we found several classes of optimal LRCs within the constacyclic codes of length $\eta p^{s}$. In general, when $x^n-\lambda_0$ is reducible, the enumeration of constacyclic codes in $\mathbb{F}_{p^m}[x]/((x^\eta - \lambda_0)^{p^s})$ is unsolved for arbitrary value of $\eta$. We refer the readers to \cite{chen14,dinh12, zhao18} and the references therein.

A linear code with $(r,\delta)$-locality ensures a code tolerate multiple node failures while preserving locality. We have an upper bound on the minimum distance of linear codes with $(r,\delta)$-locality
 $$d\leq n-k+1-\left(\Big\lceil \frac{k}{r}\Big\rceil-1\right)(\delta-1).$$
Characterizing constacyclic codes that are optimal $(r,\delta)$-LRC is an interesting future research direction.

Some other directions for future work include:

\begin{enumerate}
\item Characterize the locality of the $\lambda$-constacyclic codes of length $\eta p^{s}$ over $\mathbb{F}_{q}$ when $x^{\eta}-\lambda_{0}$ is reducible over $\mathbb{F}_{q}$, where $\eta$ is a positive integer coprime to $p$ and $\lambda$, $\lambda_{0}$ are elements in $\mathbb{F}_{q}$ satisfying $\lambda_{0}^{p^{s}}=\lambda$.
\item Find out more optimal repeated-root constacyclic codes.
 \item Constructing more linear codes that are both $d$-optimal and $r$-optimal with large distance and small locality.
\end{enumerate}



\appendix
In the appendix, we will show that when $x^{\eta}-\lambda_0$ is irreducible over $\mathbb{F}_{p^m}$, there are only these optimal LRCs we shown in Theorem \ref{thm 5-2}. Note that under the condition of $x^{\eta}-\lambda_0$ being irreducible over $\mathbb{F}_{p^m}$, there are $p^s+1$ different $\lambda$-constacyclic codes of length $\eta p^s$ over $\mathbb{F}_{p^m}$ which are $\mathcal{C}_{i}(\eta,p^s,\lambda_0)=\langle (x^\eta-\lambda_0)^i\rangle$ for $0\leq i\leq p^s$. We will check all the cases for $0\leq i\leq p^s$.

Note that $\mathcal{C}_{i}(\eta,p^s,\lambda_0)$ is an optimal LRC for each $0\leq i\leq p^{s}$ if and only if the equality $$\eta i=\lceil\frac{\eta (p^{s}-i)}{r_{i}}\rceil+d_{i}-2$$ holds.
Verify that $\mathcal{C}_{0}(\eta,p^s,\lambda_0)=\langle 1\rangle$ and $\mathcal{C}_{p^{s}}(\eta,p^s,\lambda_0)=\langle 0\rangle$ are not optimal LRCs. We discuss $i\in \{1,\ldots, p^{s}-1\}$ in four cases:
\begin{itemize}
\item $1\leq i\leq p^{s-1}-1$;
  \item $i=p^{s-1}$;
  \item $p^{s-1}+1\leq i\leq p^{s}-p^{s-1}$;
  \item $p^{s}-p^{s-1}+1\leq i\leq p^{s}-1$.
\end{itemize}

According to Theorem \ref{thm 5-2}, $\mathcal{C}_{p^{s-k-1}}(\eta,p^s,\lambda_0)$ and $\mathcal{C}_2(1,p^s,\lambda_0)$ are optimal LRCs. The following lemma shows there are only these two classes of optimal LRCs when $1\leq i\leq p^{s-1}-1$.
\begin{lemma}\label{lm1}
Let $s\geq 2$ be an integer, $\eta$ be a positve integer coprime to $p$. For $1\leq i\leq p^{s-1}-1$, there exist optimal LRCs $\mathcal{C}_i(\eta,p^s,\lambda_0)$ only if
\begin{itemize}
  \item $i=p^{s-k-1}$, where $1\leq k\leq s-1$;
  \item $i=2$, $\eta=1$ and $p$ is an odd prime.
\end{itemize}
 \end{lemma}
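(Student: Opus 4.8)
The plan is to run the optimality criterion recorded just before the statement: for $0\le i\le p^s$ the code $\mathcal{C}_i(\eta,p^s,\lambda_0)$ is optimal exactly when $\eta i=\lceil \eta(p^s-i)/r_i\rceil+d_i-2$, with $d_i=d(\mathcal{C}_i)$ and $r_i$ its locality. First I would determine $d_i$ on the range $1\le i\le p^{s-1}-1$. Here $(p-1)p^{s-1}<p^s-i<p^s$, so $p^s-i$ has exactly $s$ base-$p$ digits with leading digit $p-1$; thus $\nu(p^s-i)=s$ and $\tau(p^s-i)=p-1$, and Prop.~\ref{prop} gives $d_i=(p-(p-1)+1)p^{s-s}=2$ uniformly. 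The optimality equation therefore collapses to $\eta i=\lceil \eta(p^s-i)/r_i\rceil$.

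Next I would compute the locality. Since $\mathcal{C}_i^\perp=\mathcal{C}_{p^s-i}(\eta,p^s,\lambda_0^{-1})$, Lemma~\ref{lemma2} together with Prop.~\ref{prop} applied to the dual yields $r_i=(p-\tau(i)+1)p^{s-\nu(i)}-1$. Writing $\nu=\nu(i)$, $\tau=\tau(i)$ and decomposing $i=\tau p^{\nu-1}+i'$ with $0\le i'<p^{\nu-1}$, I would translate the ceiling equation into the pair of conditions $B\ge 0$ and $\eta B<r_i$, where $B:=i(r_i+1)-p^s$ (the first corresponds to $\eta(p^s-i)/r_i\le \eta i$, the second to $\eta(p^s-i)/r_i>\eta i-1$). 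The key algebraic step is to use $r_i+1=(p-\tau+1)p^{s-\nu}$ and the identity $\tau(p-\tau+1)-p=(\tau-1)(p-\tau)$ to rewrite
\begin{equation*}
B=(\tau-1)(p-\tau)\,p^{s-1}+i'(r_i+1).
\end{equation*}
Both summands are nonnegative, so $B\ge 0$ is automatic and only $\eta B<r_i$ must be analysed.

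The case analysis is then forced by this formula. If $i'\ge 1$, then $\eta B\ge \eta\,i'(r_i+1)\ge r_i+1>r_i$, so optimality fails; hence $i'=0$ and $i=\tau p^{\nu-1}$. When $\tau=1$ we get $B=0$, the inequality $\eta B<r_i$ holds since $r_i=p^{\,s-\nu+1}-1>0$, and $i=p^{\nu-1}$ with $1\le\nu\le s-1$, i.e.\ $i=p^{s-k-1}$ for $1\le k\le s-1$. When $\tau\ge 2$ (which needs $p\ge 3$), I would first exclude $\nu\ge 2$: the necessary bound $B<r_i$ gives $(\tau-1)(p-\tau)p^{\nu-1}<p-\tau+1$ after dividing by $p^{s-\nu}$, but its left side is at least $(p-2)p$ while its right side is at most $p-1$, impossible for $p\ge 3$. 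So $\nu=1$, $i=\tau$, and $\eta B<r_i$ reduces to $\eta(i-1)(p-i)<p-i+1$; dividing by $p-i\ge 1$ forces $\eta(i-1)<1+\tfrac{1}{p-i}\le 2$, whence $\eta=1$ and $i=2$, which does satisfy the inequality. This yields precisely the two listed families.

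The step I expect to be the main obstacle is the $\tau\ge 2$ sub-case, where one must at once rule out $\nu\ge 2$ and isolate $\eta=1$, $i=2$ among $\nu=1$. The delicate point is carrying the constant $-1$ in $r_i=(p-\tau+1)p^{s-\nu}-1$ correctly through the ceiling, so that $\eta B<r_i$ is treated as a strict integer inequality rather than being weakened by discarding the $-1$; the factored expression for $B$ above is what keeps this bookkeeping manageable.
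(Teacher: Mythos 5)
Your proof is correct and follows essentially the same route as the paper's: both start from the criterion $\eta i=\lceil \eta(p^s-i)/r_i\rceil+d_i-2$ with $d_i=2$ on this range, partition $1\le i\le p^{s-1}-1$ by the length and leading digit of the base-$p$ expansion of $i$ (your $(\nu,\tau)$ is the paper's $(s-k,\,p-t)$), and force $i=p^{\nu-1}$ or $i=2$ with $\eta=1$ by the same ceiling/integrality analysis. Your factorization $B=(\tau-1)(p-\tau)p^{s-1}+i'(r_i+1)$ is just a more explicit packaging of the paper's reduction to $i=\lceil p^{s-k}/(t+1)\rceil$.
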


\begin{IEEEproof}
We divide the interval $[1,p^{s-1}-1]$ into $(p-1)(s-1)$ disjoint parts:
\begin{equation*}
p^{s-k}-tp^{s-k-1}\leq i\leq p^{s-k}-(t-1)p^{s-k-1}-1,
\end{equation*}
 where $1\leq t\leq p-1$ and $1\leq k\leq s-1$. 
By Proposition \ref{prop}, $\mathcal{C}_{i}$ has parameters $[\eta p^{s},\eta(p^{s}-i),2]$ and locality $(t+1)p^{k}-1$. Suppose that $\mathcal{C}_{i}$ is an optimal LRC. Then we have
\begin{equation*}
\eta i=\lceil\frac{\eta(p^{s}-i)}{(t+1)p^{k}-1}\rceil.
\end{equation*}
Since $\eta$ is a positive integer, it follows that
\begin{equation}\label{eq1}
 i=\lceil\frac{p^{s}-i}{(t+1)p^{k}-1}\rceil.
\end{equation}
Hence
\begin{equation*}
i=\lceil \frac{p^{s-k}}{t+1}\rceil.
\end{equation*}

 If $t+1$ divides $p^{s-k}$, then $t=p-1$ since $1\leq t\leq p-1$. It follows that $i=p^{s-k-1}$, where $1\leq k\leq s-1$. 

 If $t+1$ does not divide $p^{s-k}$,
then $1\leq t\leq p-2$, and this implies $p\geq 3$ and $t+1\leq p-1<p$, hence
\begin{equation*}
\frac{p^{s-k}}{t+1}<p^{s-k}-tp^{s-k-1}.
\end{equation*}
Combining with $i=\lceil \frac{p^{s-k}}{t+1}\rceil$ and $i\geq p^{s-k}-tp^{s-k-1}$, we obtain
\begin{equation*}
\frac{p^{s-k}}{t+1}+1>p^{s-k}-tp^{s-k-1},
\end{equation*}
which implies that
\begin{equation*}
tp^{s-k-1}(p-t-1)<t+1.
\end{equation*}
Therefore, it forces $p^{s-k-1}=1$ and $p-t-1=1$. Hence $k=s-1$, $t=p-2$ and
\begin{equation*}
i=\lceil \frac{p^{s-(s-1)}}{p-2+1}\rceil=2.
\end{equation*}
Note that the equality (\ref{eq1}) holds in the assumption that $i$ existence. Assume that $\mathcal{C}_{2}$ is an optimal LRC with parameters $[\eta p^{s},\eta(p^{s}-2),2]$ and locality $p^{s}-p^{s-1}-1$.
Then
\begin{equation}\label{equa:1}
  2\eta=\lceil \frac{\eta(p^{s}-2)}{p^{s}-p^{s-1}-1}\rceil
\end{equation}
We claim that (\ref{equa:1}) holds only if $\eta=1$. When $\eta>1$, it follows from $p\geq3$ that
$\frac{p^{s-1}-1}{p^{s}-p^{s-1}-1}<\frac{1}{2}$ and $\lceil \frac{\eta(p^{s-1}-1)}{p^{s}-p^{s-1}-1}\rceil<\frac{\eta}{2}+1\leq \eta$, which implies that
\begin{equation*}
  2\eta>\lceil \frac{\eta(p^{s}-2)}{p^{s}-p^{s-1}-1}\rceil.
\end{equation*}
\end{IEEEproof}

For the case of $i=p^{s-1}$, according to Theorem \ref{thm 5-2}, $\mathcal{C}_{p^{s-1}}(\eta,p^s,\lambda_0)$ is optimal LRC. In the interval of $[p^{s-1}+1, p^{s}-p^{s-1}]$, we know $\mathcal{C}_{p^{s-1}+1}(1,p^s,\lambda_0)$ and $\mathcal{C}_{t+1}(1,p,\lambda_0)$ are optimal LRCs. The following lemma shows there are only these two classes of optimal LRCs if $p^{s-1}+1\leq i\leq p^{s}-p^{s-1}\}$.


\begin{lemma}\label{lm3}
Let $s$ be a positive integer, $\eta$ be an integer coprime to $p$. For $p^{s-1}+1\leq i\leq p^{s}-p^{s-1}\}$, there exist optimal LRCs $\mathcal{C}_i(\eta,p^s,\lambda_0)$ only if
\begin{itemize}
  \item $i=p^{s-1}+1$, $s\geq2$, $\eta=1$ and $p$ is an odd prime;
  \item $i=t+1$, $s=1$, $\eta=1$ and $p$ is an odd prime.
\end{itemize}
\end{lemma}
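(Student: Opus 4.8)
The plan is to mirror the structure of the proof of Lemma~\ref{lm1}, but now on the interval $p^{s-1}+1\leq i\leq p^{s}-p^{s-1}$. First I would partition this interval into blocks on which the minimum distance and the locality of $\mathcal{C}_i(\eta,p^s,\lambda_0)$ are constant. Writing $p^s-i$ in its radix-$p$ expansion, the values $\nu(p^s-i)$ and $\tau(p^s-i)$ that appear in Prop.~\ref{prop} stay fixed on suitable sub-blocks, so I would introduce parameters $t$ and $k$ via bounds of the shape $p^{s-k}-tp^{s-k-1}\leq i \leq p^{s-k}-(t-1)p^{s-k-1}-1$ (exactly as in Lemma~\ref{lm1}), and then read off $d_i$ and $d_i^\perp$ from Prop.~\ref{prop} applied to $\mathcal{C}_i$ and its dual $\mathcal{C}_{p^s-i}(\eta,p^s,\lambda_0^{-1})$. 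The locality is $r_i=d_i^\perp-1$ by Lemma~\ref{lemma2}.

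Next I would impose the optimality condition stated just before Lemma~\ref{lm1}, namely $\eta i = \lceil \eta(p^s-i)/r_i\rceil + d_i - 2$. Since $\eta$ is coprime to $p$ and divides both sides appropriately, I expect to reduce to a ceiling identity in $i$ alone, of the form $i = \lceil (p^s-i)/r_i \rceil + (d_i-2)$, analogous to \eqref{eq1}. The key arithmetic step is then to show that this equation forces the claimed solutions. For $i$ near the top of the interval, $d_i=3$ and the dual has locality $p-1$, and solving the ceiling equation should pin $i$ down to $p^{s-1}+1$, forcing $\eta=1$ by the same divisibility/size argument used for $\mathcal{C}_2$ at the end of Lemma~\ref{lm1}. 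The degenerate case $s=1$ collapses the interval so that $i$ ranges over $2\leq i\leq p-1$, where every code is MDS (Class~5), giving $i=t+1$ with $\eta=1$.

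The main obstacle, as in Lemma~\ref{lm1}, will be the case analysis for the generic sub-blocks where $t+1\nmid p^{s-k}$: here I must show the ceiling equation has \emph{no} solution, by bounding $\lceil p^{s-k}/(t+1)\rceil$ against the interval endpoints and deriving a contradiction of the type $tp^{s-k-1}(p-t-1)<t+1$ that forces the boundary values of $t$ and $k$. I would also need to separately verify that each would-be solution actually produces an integer $i$ lying in the prescribed block, discarding spurious roots, and to handle the constraint $\eta=1$ whenever a solution survives, since the ceiling term $\lceil\eta(p^s-i)/r_i\rceil$ behaves strictly smaller than $\eta\lceil(p^s-i)/r_i\rceil$ for $\eta>1$ unless $r_i\mid(p^s-i)$. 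Finally, since the interval $[p^{s-1}+1,\,p^s-p^{s-1}]$ is symmetric in a precise sense to $[1,p^{s-1}-1]$ under $i\mapsto p^s-i$, I would double-check that this duality does not simply reduce the present lemma to Lemma~\ref{lm1}; if it does, I would state the reduction explicitly and only treat the endpoint cases $i=p^{s-1}+1$ and the $s=1$ collapse in detail.
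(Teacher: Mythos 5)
Your outline follows the same overall strategy as the paper (partition the interval into blocks on which Prop.~\ref{prop} gives constant parameters, impose the Singleton-like equality, and derive a contradiction on the generic blocks), but two of your concrete steps would fail. First, the reduction of $\eta i=\lceil \eta(p^{s}-i)/r_{i}\rceil+d_{i}-2$ to the $\eta$-free equation $i=\lceil (p^{s}-i)/r_{i}\rceil+d_{i}-2$ is not valid on this interval: here $d_{i}=t+2\geq 3$, so the additive constant $t=d_i-2$ does not scale with $\eta$ and cannot be divided out (this step worked in Lemma~\ref{lm1} only because $d_i=2$ there). The paper instead keeps $\eta$ and converts the optimality equation $\eta i=\lceil \eta(p^{s}-i)/(p-t)\rceil+t$ into the two-sided bound $\frac{\eta p^{s}+pt-t^{2}}{\eta(p-t+1)}\leq i<\frac{\eta p^{s}+pt-t^{2}}{\eta(p-t+1)}+\frac{p-t}{\eta(p-t+1)}$, an interval of length less than $1/\eta$; the conclusions $\eta=1$ and $i=p^{s-1}+1$ (possible only for $t=1$) come precisely from requiring this short interval to contain an admissible integer, not from a separate divisibility argument. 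Second, the block decomposition you import from Lemma~\ref{lm1} covers $[1,p^{s-1}-1]$, not $[p^{s-1}+1,p^{s}-p^{s-1}]$; the decomposition needed here is the $p-2$ blocks $tp^{s-1}+1\leq i\leq (t+1)p^{s-1}$ with $1\leq t\leq p-2$, and the right endpoints $i=(t+1)p^{s-1}$ must be split off because their locality drops to $p-t-1$ while the interior points have locality $p-t$.

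That endpoint case is not a technicality you can fold into a ``degenerate $s=1$ collapse'': for every $s$, imposing the Singleton-like equality at $i=(t+1)p^{s-1}$ forces $\eta p^{s-1}=1$, hence $s=1$, $\eta=1$ and $i=t+1$, which is exactly how the second family in the statement arises. Your plan as written would miss these points (or misattribute them), since on your blocks the minimum distance is not constant at the left endpoint. Two smaller corrections: $d_i=3$ occurs at the bottom of the interval (near $i=p^{s-1}+1$), not the top; and the symmetry $i\mapsto p^{s}-i$ does not reduce this lemma to Lemma~\ref{lm1}, because the Singleton-like bound for LRCs is not self-dual, so optimality of $\mathcal{C}_i$ and of $\mathcal{C}_{p^s-i}$ are unrelated.
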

\begin{IEEEproof}
We divide the interval $[p^{s-1}+1,p^{s}-p^{s-1}]$ into $p-2$ disjoint parts:
\begin{equation*}
t p^{s-1}+1\leq i\leq (t+1)p^{s-1},
\end{equation*}
where $1\leq t\leq p-2$.
For each part $[t p^{s-1}+1,(t+1)p^{s-1}]$, we first consider the case of $i=(t+1)p^{s-1}$. According to Proposition \ref{prop}, $\mathcal{C}_{(t+1)p^{s-1}}$ has parameters $[\eta p^{s},\eta(p^{s}-(t+1)p^{s-1}),t+2]$ and locality $p-t-1$.
Suppose that $\mathcal{C}_{(t+1)p^{s-1}}$ is an optimal LRC. By the Singleton-like bound \eqref{equation1} it deduces $\eta p^{s-1}=1$ and hence $s=1$, $\eta=1$ and $i=t+1$.

For $tp^{s-1}+1\leq i\leq (t+1)p^{s-1}-1$, we know that $\mathcal{C}_{i}$ has parameters $[\eta p^{s},\eta(p^{s}-i),t+2]$ and
locality $p-t$.  Note that $s>1$ since $tp^{s-1}+1\leq(t+1)p^{s-1}-1$. 
Assuming that $\mathcal{C}_{i}$ is an optimal LRC. Then
\begin{equation*}
\eta i=\lceil \frac{\eta(p^{s}-i)}{p-t}\rceil +t.
\end{equation*}
Thus we obtain
\begin{equation}\label{8}
 \frac{\eta p^{s}+pt-t^{2}}{\eta(p-t+1)}\leq i<\frac{\eta p^{s}+pt-t^{2}}{\eta(p-t+1)}+\frac{p-t}{\eta(p-t+1)}.
\end{equation}
If $t=1$, we have
\begin{equation*}
p^{s-1}+\frac{p-1}{\eta p}\leq i<p^{s-1}+\frac{2(p-1)}{\eta p},
\end{equation*}
Note that $\frac{2(p-1)}{\eta p}>1$ since $i$ is an integer and $0<\frac{p-1}{\eta p}<1$, which forces $\eta =1$ and $p\geq 3$. Therefore, $i=p^{s-1}+1$. 
\\If $1<t\leq p-2$, it implies $p>3$. 
One can verify that
\begin{equation*}
tp^{s-1}+1>\frac{\eta p^{s}+pt-t^{2}}{\eta(p-t+1)}+\frac{p-t}{\eta(p-t+1)}.
\end{equation*}
Together with (\ref{8}), we obtain $i<tp^{s-1}+1$, which contradicts to $tp^{s-1}+1\leq i\leq (t+1)p^{s-1}-1$. Thus $\mathcal{C}_{i}$ is not an optimal LRC in this case.
\end{IEEEproof}

When $p^{s}-p^{s-1}+1\leq i\leq p^{s}-1$, Theorem \ref{thm 5-2} shows two classes of optimal LRCs which are $\mathcal{C}_{2^{s-1}+1}(1,2^s,\lambda_0)$ and $\mathcal{C}_{p^{s}-1}(1,p^s,\lambda_0)$. The following lemma shows there are only these two classes of optimal LRCs when $p^{s}-p^{s-1}+1\leq i\leq p^{s}-1$.

\begin{lemma}\label{lm4}
Let $s\geq 2$ be a positive integer, $\eta$ be a positive integer coprime to $p$. For $p^{s}-p^{s-1}+1\leq i\leq p^{s}-1$, there exist optimal LRCs $\mathcal{C}_i(\eta,p^s,\lambda_0)$ only if
\begin{itemize}
  \item $i=2^{s-1}+1$, $\eta=1$ and $p=2$;
  \item $i=p^{s}-1$, $\eta=1$.
\end{itemize}
\end{lemma}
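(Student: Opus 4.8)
The plan is to first pin down the locality on the whole range, then collapse the optimality requirement to a single equation, and finally dispose of it by a partition-and-bound argument in the spirit of Lemmas~\ref{lm1} and~\ref{lm3}. Throughout I set $j:=p^s-i$, so that the hypothesis $p^s-p^{s-1}+1\leq i\leq p^s-1$ becomes $1\leq j\leq p^{s-1}-1$. First I would compute the dual distance. Every $i$ in the range satisfies $(p-1)p^{s-1}<i<p^s$, so its $p$-ary expansion has exactly $s$ digits with leading digit $p-1$; applying Prop.~\ref{prop} to the dual $\mathcal{C}_i^{\perp}=\mathcal{C}_{j}(\eta,p^s,\lambda_0^{-1})$ gives $d^{\perp}=(p-(p-1)+1)p^{\,s-s}=2$ for every such $i$. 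By Lemma~\ref{lemma2} the minimum locality is therefore $r=d^{\perp}-1=1$ on the entire range, a welcome simplification not available in the small-$i$ regime.

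With $r=1$, $n=\eta p^s$, $k=\eta j$, and $d=(p-\tau(j)+1)p^{\,s-\nu(j)}$ from Prop.~\ref{prop}, the Singleton-like bound~\eqref{eq:Singleton2} reads $d\leq \eta(p^s-2j)+2$, and $\mathcal{C}_i$ is optimal precisely when
\[
(p-\tau(j)+1)\,p^{\,s-\nu(j)} = \eta(p^s-2j)+2 .
\]
I would partition $[1,p^{s-1}-1]$ into the parts $bp^{a-1}\leq j\leq (b+1)p^{a-1}-1$ indexed by $a=\nu(j)\in\{1,\ldots,s-1\}$ and $b=\tau(j)\in\{1,\ldots,p-1\}$, on each of which $d=(p-b+1)p^{\,s-a}$ is constant. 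Solving the displayed equation gives $\eta=\frac{(p-b+1)p^{\,s-a}-2}{p^s-2j}$, so the requirement $\eta\geq 1$ forces $j\geq \tfrac12 p^{\,s-a}(p^a-p+b-1)+1$; comparing this lower bound against the upper limit $j\leq (b+1)p^{a-1}-1$ of the part is the feasibility test to analyze.

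For $b\geq 2$ (possible only when $p$ is odd) the part with $a=1$ is the single point $j=b$, where optimality forces $p^{s-1}\leq 2$, impossible for $p\geq 3$; for $a\geq 2$ a magnitude estimate using $a\leq s-1$ and $b+1\leq p$ yields $\tfrac12 p^{\,s-a}(p^a-p+b-1)>p^a>(b+1)p^{a-1}-2$, so the required lower bound on $j$ exceeds the top of the part and no optimal code exists. For the leading case $b=1$ one has $d=p^{\,s-a+1}$; when $p$ is odd the same estimate forces $a=1$, hence $j=1$, $\eta=1$, and $i=p^s-1$, the repetition code. When $p=2$ the equation simplifies to $\eta(2^{s-1}-j)=2^{\,s-a}-1$, and checking membership $j\in[2^{a-1},2^a-1]$ leaves only $a=1$ (giving $j=1$, $i=2^s-1$) and $a=s-1$ (giving $j=2^{s-1}-1$, $i=2^{s-1}+1$), both with $\eta=1$.

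The hard part will be the borderline instances of the magnitude estimate, for example $p=3$, $a=2$, $s=3$, where $p^{\,s-a+1}(p^{a-1}-1)=2p^a$ holds with equality, so that the strictness needed to rule out a code comes only from the positive term $(b-1)p^{\,s-a}$ or from integrality of $\eta$. Keeping the integrality of $\eta$ and the exact range membership of the forced $j$ coupled correctly — rather than arguing with the loose inequalities alone — is exactly where a careless bound would wrongly admit or discard a code, so I expect most of the care to be spent there.
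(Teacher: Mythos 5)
Your proposal is correct and follows essentially the same route as the paper's proof: both establish that the locality is $1$ on the whole range via the dual distance, partition the range into the level sets of the minimum distance given by Prop.~\ref{prop}, reduce optimality to a single equation, and eliminate all but the two listed cases by comparing the forced index against the part boundaries using $\eta\geq 1$ and integrality. Your substitution $j=p^s-i$ and indexing by $(a,b)=(\nu(j),\tau(j))$ is just a cosmetic reparametrization of the paper's $(k,t)$, and your boundary comparison is algebraically equivalent to the paper's decomposition of the resulting inequality into a sum of nonnegative terms.
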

\begin{IEEEproof}
  Consider the intervals $[p^{s}-p^{s-k}+(t-1)p^{s-k-1}+1,p^{s}-p^{s-k}+tp^{s-k-1}]$,
 where $1\leq t\leq p-1$ and $1\leq k\leq s-1$. Note that
 \begin{equation*}
   \bigcup\limits_{\mbox{\tiny$\begin{array}{c}1\leq t\leq p-1\\1\leq k\leq s-1 \end{array}$}}\{p^{s}-p^{s-k}+(t-1)p^{s-k-1}+1\leq i\leq  p^{s}-p^{s-k}+tp^{s-k-1}\}=\{p^{s}-p^{s-1}+1\leq i\leq p^{s}-1\}.
 \end{equation*}
According to Prop.~\ref{prop:Massey}, $\mathcal{C}_{i}$ has parameter $[\eta p^{s},\eta(p^{s}-i),(t+1)p^{k}]$. Note that
$1\leq p^{s}-i\leq p^{s-1}-1$,
it follows from  Proposition \ref{prop} that $\mathcal{C}_{i}$ has locality
$1$. Suppose that  $\mathcal{C}_{i}$ is an optimal LRC. Then by the Singleton-like bound~\eqref{equation1},
we have
\begin{equation}\label{e1}
i=\frac{p^{s}}{2}+\frac{(t+1)p^{k}}{2}-1.
\end{equation}
 Since $i\geq p^{s}-p^{s-k}+(t-1)p^{s-k-1}+1$, \[\frac{p^{s}}{2}+\frac{(t+1)p^{k}}{2\eta}-\frac{1}{\eta}\geq p^{s}-p^{s-k}+(t-1)p^{s-k-1}+1.\]
It follows that
\begin{equation}\label{inequ:1}
\frac{1}{2}(p^{k}-2)(p^{s-k}-\frac{t+1}{\eta})+(t-1)(p^{s-k-1}-1)+(t-\frac{t}{\eta})\leq 0.
\end{equation}
 Check that for prime $p$, $1\leq k\leq s-1$ and $1\leq t\leq p-1$,
 \begin{equation*}
   \frac{1}{2}(p^{k}-2)(p^{s-k}-\frac{t+1}{\eta})\geq 0
 \end{equation*}
 and
 \[(t-1)(p^{s-k-1}-1)\geq 0.\]

 If $\eta>1$, then $t-\frac{t}{\eta}>0$, which implies that the inequality (\ref{inequ:1}) is impossible. Therefore, the code $\mathcal{C}_{i}$ for $p^{s}-p^{s-1}+1\leq i\leq p^{s}-1$ cannot be optimal LRC in the case of $\eta>1$.

 If $\eta=1$, the inequality in (\ref{inequ:1}) implies
\begin{equation}\label{eq in le 1}
  \frac{1}{2}(p^{k}-2)(p^{s-k}-t-1)= 0
\end{equation}
 and
 \begin{equation}\label{eq in le 2}
  (t-1)(p^{s-k-1}-1)= 0.
 \end{equation}
We consider the equalities (\ref{eq in le 1}) and (\ref{eq in le 2}) in the following two cases.

Case 1. If $p^{k}-2=0$. Then $p=2$ and $k=1$.  Since $1\leq t\leq p-1$, it forces $t=1$ and then $(t-1)(p^{s-k-1}-1)= 0$. According to (\ref{e1}), $i=2^{s-1}+1$.

Case 2. If $(p^{s-k}-t-1)= 0$. Since $2\leq t+1\leq p$, it forces
\[p^{s-k}=t+1=p.\]
Then $s=k+1$ and $t=p-1$. According to (\ref{e1}), $i=p^{s}-1$.
\end{IEEEproof}

According to the previous three lemmas, we obtain the second part of Theorem \ref{thm 5-2}, i.e., when $x^\eta- \lambda_0$ is irreducible over $\mathbb{F}_{p^m}$ there is no other optimal LRC among the repeated-root constacyclic codes in $\frac{\mathbb{F}_{p^m}[x]}{\langle (x^{\eta} - \lambda_0)^{p^s} \rangle}$. 
\ifCLASSOPTIONcaptionsoff
  \newpage
\fi

\bibliographystyle{IEEEtran}
\bibliography{IEEEabrv,LRC}

\begin{thebibliography}{10}
\providecommand{\url}[1]{#1}
\csname url@samestyle\endcsname
\providecommand{\newblock}{\relax}
\providecommand{\bibinfo}[2]{#2}
\providecommand{\BIBentrySTDinterwordspacing}{\spaceskip=0pt\relax}
\providecommand{\BIBentryALTinterwordstretchfactor}{4}
\providecommand{\BIBentryALTinterwordspacing}{\spaceskip=\fontdimen2\font plus
\BIBentryALTinterwordstretchfactor\fontdimen3\font minus
  \fontdimen4\font\relax}
\providecommand{\BIBforeignlanguage}[2]{{%
\expandafter\ifx\csname l@#1\endcsname\relax
\typeout{** WARNING: IEEEtran.bst: No hyphenation pattern has been}%
\typeout{** loaded for the language `#1'. Using the pattern for}%
\typeout{** the default language instead.}%
\else
\language=\csname l@#1\endcsname
\fi
#2}}
\providecommand{\BIBdecl}{\relax}
\BIBdecl

\bibitem{rashmi2013solution}
K.~Rashmi, N.~B. Shah, D.~Gu, H.~Kuang, D.~Borthakur, and K.~Ramchandran, ``A
  solution to the network challenges of data recovery in erasure-coded
  distributed storage systems: A study on the facebook warehouse cluster,'' in
  \emph{5th {USENIX} Workshop on Hot Topics in Storage and File Systems
  (HotStorage 13)}, 2013.

\bibitem{dimakis2010network}
A.~Dimakis, P.~Godfrey, Y.~Wu, M.~Wainwright, and K.~Ramchandran, ``Network
  coding for distributed storage systems,'' \emph{{IEEE} Trans. on Information
  Theory}, vol.~56, no.~9, pp. 4539--4551, Sep. 2010.

\bibitem{s8}
J.~Han and L.~A. Lastras{-}Montano, ``Reliable memories with subline
  accesses,'' in \emph{{IEEE} Int. Symp. on Inf. Theory}, Nice, Jun. 2007, pp.
  2531--2535.

\bibitem{s9}
P.~Gopalan, C.~Huang, H.~Simitci, and S.~Yekhanin, ``On the locality of
  codeword symbols,'' \emph{{IEEE} Trans. on Information Theory}, vol.~58,
  no.~11, pp. 6925--6934, 2012.

\bibitem{s33}
V.~R. Cadambe and A.~Mazumdar, ``Bounds on the size of locally recoverable
  codes,'' \emph{{IEEE} Trans. on Information Theory}, vol.~61, no.~11, pp.
  5787--5794, 2015.

\bibitem{s23}
A.~Wang, Z.~Zhang, and D.~Lin, ``Bounds for binary linear locally repairable
  codes via a sphere-packing approach,'' \emph{{IEEE} Trans. on Information
  Theory}, vol.~65, no.~7, pp. 4167--4179, 2019.

\bibitem{s10}
N.~Silberstein, A.~S. Rawat, O.~O. Koyluoglu, and S.~Vishwanath, ``Optimal
  locally repairable codes via rank-metric codes,'' in \emph{{IEEE} Int. Symp.
  on Inf. Theory}, Istanbul, Jul. 2013, pp. 1819--1823.

\bibitem{s12}
I.~Tamo and A.~Barg, ``A family of optimal locally recoverable codes,''
  \emph{{IEEE} Trans. on Information Theory}, vol.~60, no.~8, pp. 4661--4676,
  2014.

\bibitem{s13}
L.~Jin, L.~Ma, and C.~Xing, ``Construction of optimal locally repairable codes
  via automorphism groups of rational function fields,'' \emph{{IEEE} Trans. on
  Information Theory}, vol.~66, no.~1, pp. 210--221, 2020.

\bibitem{s14}
X.~Li, L.~Ma, and C.~Xing, ``Optimal locally repairable codes via elliptic
  curves,'' \emph{{IEEE} Trans. on Information Theory}, vol.~65, no.~1, pp.
  108--117, 2019.

\bibitem{s16}
A.~Barg, K.~Haymaker, E.~W. Howe, G.~L. Matthews, and A.~V{\'a}rilly-Alvarado,
  ``Locally recoverable codes from algebraic curves and surfaces,'' in
  \emph{Algebraic Geometry for Coding Theory and Cryptography}.\hskip 1em plus
  0.5em minus 0.4em\relax Springer, 2017, pp. 95--127.

\bibitem{s5}
Y.~Luo, C.~Xing, and C.~Yuan, ``Optimal locally repairable codes of distance 3
  and 4 via cyclic codes,'' \emph{{IEEE} Trans. on Information Theory},
  vol.~65, no.~2, pp. 1048--1053, 2019.

\bibitem{s15}
Z.~Zhang, J.~Xu, and M.~Liu, ``Constructions of optimal locally repairable
  codes over small fields,'' \emph{SCIENTIA SINICA Mathematica}, vol.~47,
  no.~11, pp. 1607--1614, 2017.

\bibitem{s4}
V.~Guruswami, C.~Xing, and C.~Yuan, ``How long can optimal locally repairable
  codes be?'' \emph{{IEEE} Trans. on Information Theory}, vol.~65, no.~6, pp.
  3662--3670, 2019.

\bibitem{s6}
J.~Hao, S.-T. Xia, K.~W. Shum, B.~Chen, F.-W. Fu, and Y.~Yang, ``Bounds and
  constructions of locally repairable codes: Parity-check matrix approach,''
  \emph{{IEEE} Trans. on Information Theory}, vol.~66, no.~12, pp. 7465--7474,
  Dec. 2020.

\bibitem{s38}
I.~Tamo and A.~Barg, ``A family of optimal locally recoverable codes,''
  \emph{{IEEE} Trans. on Information Theory}, vol.~60, no.~8, pp. 4661--4676,
  2014.

\bibitem{s28}
P.~Tan, Z.~Zhou, H.~Yan, and U.~Parampalli, ``Optimal cyclic locally repairable
  codes via cyclotomic polynomials,'' \emph{{IEEE} Communications Letters},
  vol.~23, no.~2, pp. 202--205, 2019.

\bibitem{s29}
Z.~Sun, S.~Zhu, and L.~Wang, ``Optimal constacyclic locally repairable codes,''
  \emph{{IEEE} Communications Letters}, vol.~23, no.~2, pp. 206--209, 2019.

\bibitem{s30}
A.~Beemer, R.~Coatney, V.~Guruswami, H.~H. Lopez, and F.~Pi{\~{n}}ero,
  ``Explicit optimal-length locally repairable codes of distance 5,'' in
  \emph{56th Annual Allerton Conference on Communication, Control, and
  Computing, Allerton, Monticello, IL, USA, October 2-5}, 2018, pp. 800--804.

\bibitem{s31}
B.~Chen, S.-T. Xia, J.~Hao, and F.-W. Fu, ``Constructions of optimal cyclic
  $(r, \delta)$ locally repairable codes,'' \emph{{IEEE} Trans. on Information
  Theory}, vol.~64, no.~4, pp. 2499--2511, 2018.

\bibitem{s32}
I.~Tamo, A.~Barg, S.~Goparaju, and A.~R. Calderbank, ``Cyclic {LRC} codes and
  their subfield subcodes,'' in \emph{{IEEE} Int. Symp. on Inf. Theory}, Hong
  Kong, Jun. 2015, pp. 1262--1266.

\bibitem{CFXF19}
B.~Chen, W.~Fang, S.-T. Xia, and F.-W. Fu, ``Constructions of optimal
  $(r,\delta)$ locally repairable codes via constacyclic codes,'' \emph{{IEEE}
  Trans. on Information Theory}, vol.~65, no.~8, pp. 5753--5763, Aug. 2019.

\bibitem{QZ20}
J.~Qian and L.~Zhang, ``New optimal cyclic locally recoverable codes of length
  $n = 2(q + 1)$,'' \emph{{IEEE} Trans. on Information Theory}, vol.~66, no.~1,
  pp. 233--239, Jan. 2020.

\bibitem{s2}
W.~C. Huffman and V.~Pless, \emph{Fundamentals of Error-Correcting
  Codes}.\hskip 1em plus 0.5em minus 0.4em\relax Cambridge University Press,
  2003.

\bibitem{dinh18}
H.~Q. Dinh, Y.~Fan, H.~Liu, X.~Liu, and S.~Sriboonchitta, ``On self-dual
  constacyclic codes of length $p^s$ over $\mathbb{F}_{p^m} + u
  \mathbb{F}_{p^m}$,'' \emph{Discrete Mathematics}, vol. 341, no.~2, pp.
  324--335, 2018.

\bibitem{s3}
G.~Castagnoli, J.~L. Massey, P.~A. Schoeller, and N.~von Seemann, ``On
  repeated-root cyclic codes,'' \emph{{IEEE} Trans. on Information Theory},
  vol.~37, no.~2, pp. 337--342, 1991.

\bibitem{Dinh08}
H.~Q. Dinh, ``On the linear ordering of some classes of negacyclic and cyclic
  codes and their distance distributions,'' \emph{Finite Fields and Their
  Applications}, no.~14, pp. 22--40, 2008.

\bibitem{OO09}
H.~\"{O}zadam and F.~\"{O}zbudak, ``A note on negacyclic and cyclic codes of
  length $p^s$ over a finite field of characteristic $p$,'' \emph{Advances in
  Mathematics of Communications}, vol.~3, no.~3, pp. 265--271, 2009.

\bibitem{Permouth13}
S.~R. {L\'{o}pez-Permouth}, H.~\"{O}zadam, F.~\"{O}zbudak, and S.~Szabo,
  ``Polycyclic codes over {G}alois rings with applications to repeated-root
  constacyclic codes,'' \emph{Finite Fields and Their Applications}, no.~19,
  pp. 16--38, 2013.

\bibitem{liu17}
H.~Liu and Y.~Maouche, ``Some repeated-root constacyclic codes over {G}alois
  rings,'' \emph{{IEEE} Trans. on Information Theory}, vol.~63, no.~10, pp.
  6247--6255, Oct. 2017.

\bibitem{sharma19}
A.~Sharma and T.~Sidana, ``On the structure and distances of repeated-root
  constacyclic codes of prime power lengths over finite commutative chain
  rings,'' \emph{{IEEE} Trans. on Information Theory}, vol.~65, no.~2, pp.
  1072--1084, Feb. 2019.

\bibitem{s1}
R.~Lidl and H.~Niederreiter, \emph{Finite Fields}, 2nd~ed., ser. Encyclopedia
  of Math. and Its Applications.\hskip 1em plus 0.5em minus 0.4em\relax
  Cambridge University Press, 1997, vol.~20.

\bibitem{chen14}
B.~Chen, H.~Q. Dinh, and H.~Liu, ``Repeated-root constacyclic codes of length $
  lp^{s}$ and their duals,'' \emph{Discrete Applied Mathematics}, vol. 177, pp.
  60--70, 2014.

\bibitem{dinh12}
H.~Q. Dinh, ``Repeated-root constacyclic codes of length $2p^{s}$,''
  \emph{Finite Fields and Their Applications}, vol.~18, no.~1, pp. 133--143,
  2012.

\bibitem{zhao18}
W.~Zhao, X.~Tang, and Z.~Gu, ``Constacyclic codes of length $kl^{m}p^{n}$ over
  a finite field,'' \emph{Finite Fields and Their Applications}, vol.~52, pp.
  51--66, 2018.

\end{thebibliography}

\end{document}